\newtheorem{lemma}{Lemma}
\newtheorem{theorem}[lemma]{Theorem}
\newtheorem{definition}[lemma]{Definition}
\newcommand{\bigO}{\smash{\ensuremath{O}}}
\newcommand{\tilO}{\smash{\ensuremath{\widetilde{O}}}}
\newcommand{\tilOm}{\smash{\ensuremath{\widetilde{\Omega}}}}
\newcommand{\tilT}{\smash{\ensuremath{\widetilde{\Theta}}}}
\newcommand{\tild}{\smash{\ensuremath{\widetilde{d}}}}
\newcommand{\hybrid}{\ensuremath{\mathsf{HYBRID}}\xspace}
\newcommand{\HYBRID}{\ensuremath{\mathsf{HYBRID}}\xspace}
\newcommand{\hybridpar}[2]{\ensuremath{\mathsf{HYBRID}(#1,#2)}}
\newcommand{\hybridparbig}[2]{\ensuremath{\mathsf{HYBRID}\big(#1,#2\big)}}
\newcommand{\LOCAL}{\ensuremath{\mathsf{LOCAL}}\xspace}
\newcommand{\CONGEST}{\ensuremath{\mathsf{CONGEST}}\xspace}
\newcommand{\NCC}{\ensuremath{\mathsf{NCC}}\xspace}
\newcommand{\CC}{\ensuremath{\mathsf{Congested\, Clique}}\xspace}
\newcommand{\MPC}{\ensuremath{\mathsf{MPC}}\xspace}
\newcommand{\eps}{\varepsilon}
\newcommand{\calA}{\mathcal{A}}
\newcommand{\calS}{\mathcal{S}}
\newcommand{\p}{\ensuremath{\!+\!}}
\DeclareMathOperator{\polylog}{polylog}
\DeclareMathOperator{\poly}{poly}
\DeclareMathOperator{\hop}{hop}
\DeclareMathOperator*{\argmin}{arg\,min}
\begin{document}
	
	\title{Near Tight Shortest Paths in the Hybrid Model}
	\date{}
	
	\author{Philipp Schneider, University of Bern, Switzerland\\ (philipp.schneider2@unibe.ch)}
	
	\maketitle
	
	\begin{abstract}
		Shortest paths problems are subject to extensive studies in classic distributed models such as the \CONGEST or \CC. These models dictate how nodes may communicate in order to determine shortest paths in a distributed input graph.
		This article focuses on shortest paths problems in the \hybrid model, which combines local communication along edges of the input graph with global communication between arbitrary pairs of nodes that is restricted in terms of bandwidth.
		
		Previous work 
		showed that it takes $\tilOm(\!\sqrt{k})$ rounds in the \hybrid model for each node to learn its distance to $k$ dedicated source nodes (aka the $k$-SSP problem), even for crude approximations. 
		This lower bound was also matched with algorithmic solutions for $k \geq n^{2/3}$. 
		However, as $k$ gets smaller, the gap between the known upper and lower bounds diverges and even becomes exponential for the single source shortest paths problem (SSSP).
		In this work we close this gap for the whole range of $k$ (up to terms that are polylogarithmic in $n$), by giving algorithmic solutions for $k$-SSP in $\tilO\big(\!\sqrt k\big)$ rounds for any $k$.
	\end{abstract}
	
	\section{Introduction}
	
	Computing or approximating shortest paths in a network has important uses, for instance to obtain information about the topology of the network or as a subroutine for related tasks like computing or updating tables for IP routing.
	In shortest paths problems nodes of a network are required to learn their distance to other nodes and in the distributed version of the shortest paths problem knowledge of input graph is spread over the nodes, which must then determine their distance to other nodes by efficiently communicating with each other. 
	
	This problem has obtained significant attention from a theoretical angle in the classical models of distributed computing over the last decade (see e.g.\ \cite{Nanongkai2014, Dory2021} and the references therein). 
	These classic models come in two different ``flavors''. The first is based on  \textit{local communication} in a graph, where nodes communicate in synchronous rounds and in each such round adjacent pairs of nodes in a graph $G$, which also serves as the problem input, may exchange a small message (this is called \CONGEST model).
	
	The second flavor captures \textit{global communication}, where any pair of nodes can exchange a small message per round and each node initially knows only its neighbors of the input graph $G$ (this became known as the \CC model).  Another example of global communication is the \MPC model (cf.\ \cite{Karloff2010, Goodrich2011,Andoni2014,Beame2017}), which in some sense generalizes the \CC model and was inspired by practical applications (cf. MapReduce \cite{Dean2008}). 
	
	
	Theoretic research on distributed graph problems and in particular shortest paths problems has concentrated mostly on distributed models where nodes use only either a local or global mode of communication. This began to change only recently with the introduction of the so called \hybrid model \cite{Augustine2020a}, where nodes have access to two modes of communication. First, a graph based local mode which allows neighbors to communicate by exchanging relatively large messages. And second, a global mode where any pair of nodes can communicate in principle but only a few small messages may be exchanged each round.
	
	The hybrid model is intended to capture the fact that in modern networks, nodes can often interface with multiple, diverse communication infrastructures. For instance, data centers combine wired communication with wireless communication \cite{Halperin2011}. Organizations can augment their own networks with communication over the Internet using virtual private networks (VPNs) \cite{Rossberg2011}. Another example are mobile devices like smart-phones, which have access to high bandwidth, short range communication via WiFi or Bluetooth with other devices that are close by, which can be combined with the relatively low-bandwidth communication via the cellular network (cf., 5G \cite{Asadi2016}). 
	
	\subsection{The \hybrid model}
	
	For the formal definition of the \hybrid model, we rely on  the concept of \textit{synchronous message passing} \cite{Lynch1996}, where nodes exchange messages and conduct local computations in synchronous rounds. Note that synchronous message passing focuses on the \textit{round complexity}, i.e., the number of rounds required to solve a distributed problem, and therefore nodes are considered computationally unbounded.
	
	\begin{definition}[Synchronous Message Passing, cf.\ \cite{Lynch1996}] 
		\label{def:sync_msg_passing}
		Let $V$ be a set of $n$ nodes with unique identifiers ID\smash{$:V \!\to\! [n] \!\stackrel{\text{def}}{=}\! \{1, \ldots , n\}$}. Time is slotted into discrete rounds. Nodes wake up synchronously at the start of some round and each round consists of the following steps. First, all nodes receive the set of messages addressed to them in the last round. Second, nodes conduct computations based on their current state and the set of received messages to compute their new state (randomized algorithms also include the result of a random function). Third, based on the new state, the next messages are sent.
	\end{definition}
	
	The aim of the  \hybrid model is to reflect the fundamental concepts of \textit{locality} and \textit{congestion} to capture the nature of distributed systems that combine {both} {physical} and {logical} networks.
	%
	%
	
	\begin{definition}[cf. \cite{Augustine2019}]
		\label{def:hybrid_model}
		The \hybridpar{\lambda}{\gamma} model is a synchronous message passing model (Def.\ \ref{def:sync_msg_passing}), subject to the following restrictions. \emph{Local mode:} nodes may send a message per round of maximum size $\lambda$ bits to each of their neighbors in a connected graph. \emph{Global mode:} nodes can send and receive messages of total size at most $\gamma$ bits per round to/from any other node(s) in the network. If these restrictions are violated a strong adversary\footnote{The strong adversary knows the states of all nodes, their source codes and even the outcome of all random functions.} selects the messages that are delivered.
	\end{definition}
	
	The parameter $\lambda$ restricts the bandwidth over {edges} in the local network, and $\gamma$ restricts the amount of global communication of {nodes}. Notably, the classical models of distributed computing are covered by this model as marginal cases: \LOCAL and \CONGEST are given by $\gamma = 0$ and $\lambda = \infty$ and $\lambda \in \bigO(\log n)$, respectively. The \CC and \NCC models are given by $\lambda = 0$ and $\gamma \in \bigO(n \log n)$ (due Lenzens  routing algorithm \cite{Lenzen2013})  and $\gamma \in \bigO(\log^2 n)$, respectively.
	
	Of particular practical and theoretical interest are non-marginal parameterizations of \hybridpar{\lambda}{\gamma} that push both communication modes to one extreme end of the spectrum. More specifically, to model the high bandwidth of physical connections we leave the size of local messages unrestricted. To model the severely restricted global bandwidth of shared logical networks, we allow only $\polylog n$ bits of global communication per node per round. Formally, we define the ``standard'' hybrid model as combination of the standard \LOCAL \cite{Peleg2000} and node capacitated clique \cite{Augustine2019} models: $\hybrid := \hybridparbig{\infty}{\bigO(\log^2 n)}$.
	
	\subsection{Preliminaries}
	\label{sec:preliminaries}
	
	We continue with some definitions, conventions and nomenclature that we will use in the following.
	
	\begin{definition}[$k$-{Sources Shortest Paths} ($k$-SSP) Problem]
		\label{def:kSSP}
		Given subset of $k$ source nodes in a graph $G=(V,E)$, every node $v \in V$ has to learn $d(v,s)$ for all sources $s$. In the $\alpha$-approximate version of the problem for \emph{stretch} $\alpha \geq 1$, every $u\in V$ has to learn values $\tilde{d}(v,s)$ such that $d(v,s)\leq \tilde{d}(v,s)\leq \alpha d(v,s)$ for all sources $s$.
	\end{definition}
	
	The \textit{all-pairs shortest paths problem} (APSP) equals the case $k=n$ and the single-source shortest paths problem (SSSP) equals the case $k=1$. Note that the local communication graph and the input graph for the graph problem are the same, which is a standard assumption for distributed models with graph-based communication (like \LOCAL and \CONGEST).
	
	Our algorithms are randomized and aim for success \textit{with high probability} (w.h.p.), which means with success probability at least $1-\frac{1}{n^c}$ for an arbitrary constant $c>0$. We write i.i.d.\ if we pick elements from some set \textit{independently, identically distributed}. In this work, logarithm functions without subscript are generally to the base of two, i.e., \smash{$\log \stackrel{\text{def}}{=} \log_2$}. Sometimes we write $\polylog n$ to describe terms of the form $q(\log n)$ where $q$ is a polynomial. We abbreviate sets of the form $\{1, \dots, k\},  k \in \mathbb{N}$ with $[k]$. We will often neglect logarithmic factors in $n$ using the soft $\tilO$-notation.
	
	We consider undirected, connected communication graphs $G = (V,E)$. Edges have {weights} $w: E \to [W]$, where $W$ is at most polynomial in $n$, thus the weight of an edge and of a simple path fits into a $\bigO(\log n)$ bit message. A graph is considered {unweighted} if $W=1$. Let $w(P) = \sum_{e \in P}w(e)$ denote the length of a path $P \subseteq E$. 
	
	Then the \emph{distance} between two nodes $u,v \in V$ is
	{$
		d_G(u,v) := \!\min_{\text{$u$-$v$-path } P} w(P).
		$}
	A path with smallest length between two nodes is called a \emph{shortest path}.
	Let $|P|$ be the number of edges (or \emph{hops}) of a path $P$.
	We define the \emph{$h$-hop limited distance} from $u$ to $v$ as
	$
	d_{G,h}(u,v) := \!\!\min_{{\text{$u$-$v$-path } P, |P| \leq h }}\, w(P).
	$
	If there is no $u$-$v$ path $P$ with $|P|\leq h$ we define $d_{G,h}(u,v) := \infty$.
	
	The \emph{hop-distance} between two nodes $u$ and $v$ is defined as
	{$
		\hop_G(u,v) := \!\min_{\text{$u$-$v$-path } P} |P|.
		$ }
	We generalize this for sets $U,W \subseteq V$
	{$
		\hop_G(U,W) := \!\min_{u \in U, w \in W} \hop_G(u,w)
		$}  (whereas $\hop_G(v,v) := 0$).
	The \emph{diameter} of $G$ is defined as
	{$
		D_G:= \max_{u,v \in V} \hop_G(u,v).
		$}
	We drop the subscript $G$, if $G$ is clear from the context.

	\subsection{Related Work}
	
	\subparagraph{Local Communication Networks.} Shortest paths problems have been intensely studied in the \CONGEST model. For the SSSP problem, \cite{Sarma2012} showed that any approximation algorithm has a runtime of $\tilOm(\!\sqrt{n} + D)$ for any constant stretch.
	Following the publication of this lower bound, there has been a series of papers that attempt to obtain algorithms that get close to the lower bound, see \cite{Lenzen2013a,Nanongkai2014,Henzinger2019,Becker2017,Elkin2017,Ghaffari2018,Forster2018,Chechik2021,CensorHillel2021a}. Notably, the work of \cite{Becker2017} gives an algorithm that computes a $(1+\varepsilon)$-approximate SSSP solution in time $\tilO(\!\sqrt{n}\!+\!D)$.
	
	In another strain of work, \cite{Haeupler2018, Zuzic2022} aim for SSSP algorithms that are competitive in terms of complexity to the best algorithm that can be crafted for a given graph topology, which is also known as \textit{universal optimality}. The work of \cite{Zuzic2022} achieves a $(1+\varepsilon)$ approximation, in time $T\!\cdot\! n^{o(1)}$, where $T$ is the complexity of the best \CONGEST algorithm for the given topology, which comes close to the best known algorithm that is optimized for the worst case where $T \in \tilOm(\!\sqrt{n} + D)$ (see above).
	
	

	The APSP problem in \CONGEST has received significant attention as well \cite{Holzer2012, Frischknecht2012, Nanongkai2014, Lenzen2015, Abboud2016,Hua2016,Elkin2017,CensorHillel2017,Huang2017,Agarwal2018,Agarwal2019,Bernstein2019,Agarwal2020,Abboud2021}.
	Linear lower bounds that hold for small diameter graphs are shown or implied by \cite{Frischknecht2012, Nanongkai2014, Abboud2016, CensorHillel2017}. Notably a lower bound of $\tilOm(n / \log n)$ \cite{Nanongkai2014} that holds for the unweighted case and constant stretch approximations is matched by an $\bigO(n / \log n)$ algorithm for unweighted graphs \cite{Hua2016}. Interestingly, \cite{Abboud2021} gives a $\Omega(n)$ lower bounded for the weighted problem, thus separating it from the unweighted case.
	In terms of algorithmic upper bounds, the current state of the art is the randomized algorithm by \cite{Bernstein2019}, which computes an exact solution in $\tilO(n)$ rounds, even for directed edges with negative and zero weights. The best deterministic, exact algorithm takes $\tilO(n^{4/3})$ rounds and is due to \cite{Agarwal2020}.

	\subparagraph{Global Communication Networks.}
	Shortest paths problems also obtained significant attention in the \CC model, see,
	\cite{LeGall2016, CensorHillel2018, CensorHillel2019, CensorHillel2019a,Izumi2019,Dory2020,Biswas2021,Dory2021}. Some of these results give fast (down to $\tilO(1)$ rounds), constant approximations for APSP and $k$-SSP in the \CC model. The power of the \CC model is demonstrated by \cite{Dory2020} which achieves a $\poly(\log \log n)$ round, constant approximation of APSP on \textit{unweighted} graphs. For cruder $\tilO(1)$ approximations of APSP \cite{Dory2021} even achieves constant rounds on weighted graphs.
	
	In a branch similar to \CC, the work of \cite{Nanongkai2014,Holzer2016,Becker2017} investigates shortest paths in the \textit{broadcast} variant of \CC, where in each round, each node may broadcast a single message. Notably, \cite{Becker2017} gives a $(1\!+\!\varepsilon)$-approximation algorithm for the SSSP problem that runs in $\tilO(1)$ rounds.
	Finally, \cite{Augustine2019} introduces the more restrictive \NCC model and considers various fundamental graph problems. Their algorithmic solutions include the computation of BFS trees, which can be used to solve the SSSP problem for \textit{unweighted} graphs of \textit{bounded arboricity} in $\tilO(D)$ rounds. 
	
	\begin{figure}[h]
		\centering
		\includegraphics[width=0.87\textwidth]{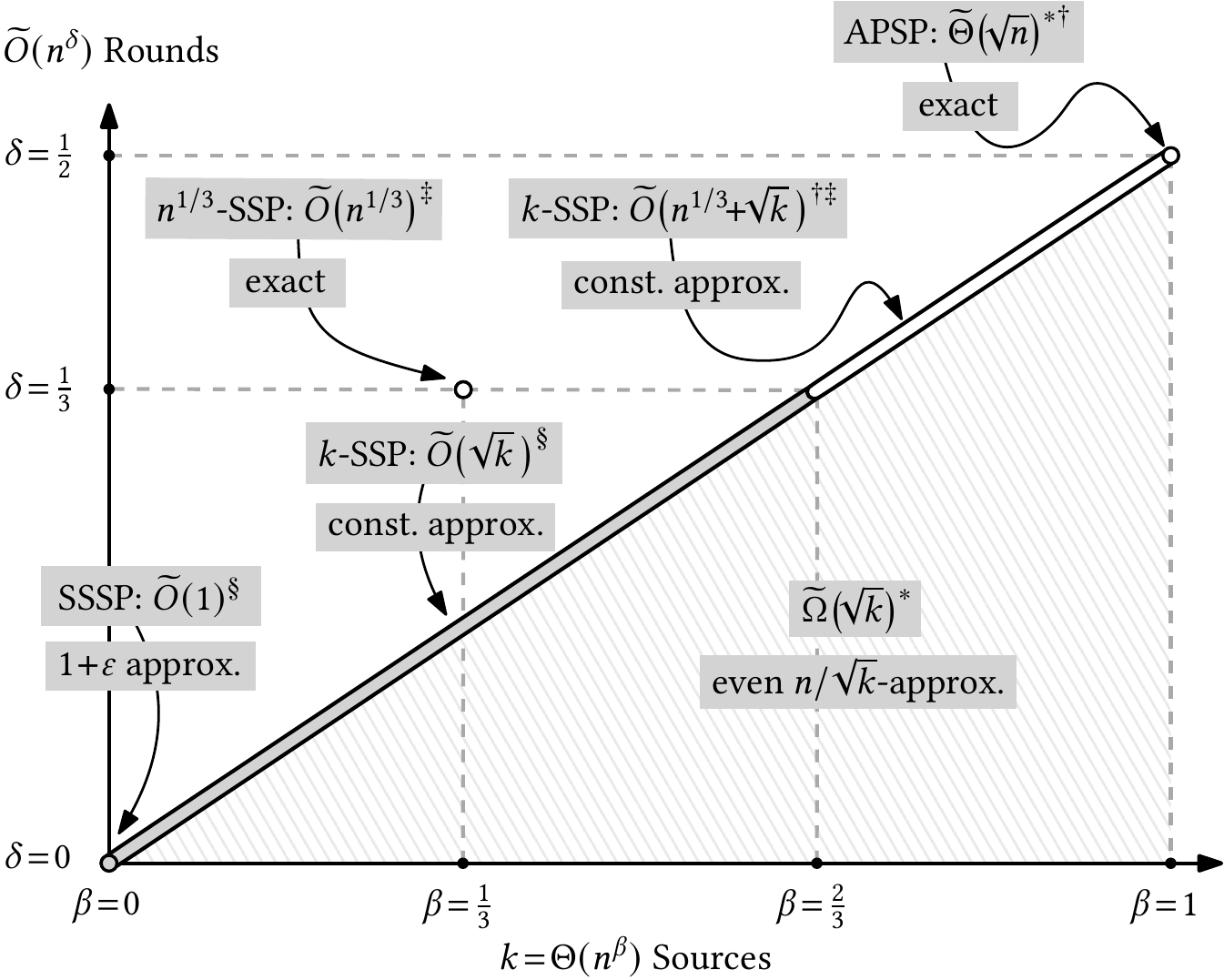}
		\caption{Complexity landscape of the $k$-SSP problem with the number of sources on the horizontal and the round complexity on the vertical axis. Circles or bars denote known upper bounds (ours in gray). The gray shaded area denotes the lower bound. References are as follows $*$: \cite{Augustine2020a}, $\dagger$: \cite{Kuhn2020}, $\ddagger$: \cite{CensorHillel2021}, \S: this work.}
		\label{fig:kssp_complexity_overview}
	\end{figure}
	
	\subparagraph{Hybrid Communication Networks}
	A graphic overview of the state of the art for the $k$-SSP problem is provided in Figure \ref{fig:kssp_complexity_overview}. Research on shortest paths in the \hybrid model was started by \cite{Augustine2020a}, with a range of results. They introduce a technique for efficiently broadcasting messages in the network, which they leverage, for instance, to obtain an exact solution of APSP in $\tilO(n^{2/3})$ rounds and a $1\p \eps$ approximation in $\tilO(\!\sqrt n)$ rounds (for constant $\eps >0$), matching their lower bound of $\tilOm(\!\sqrt n)$ (which holds even for randomized $\bigO(\!\sqrt n)$ approximations). 
	For SSSP, they give a $1 \p \eps$ approximation in $\tilO(n^{1/3})$ rounds, a $(1/\eps)^{\bigO(1/\eps)}$ approximation in $\tilO(n^\eps)$ rounds and an exact solution in $\tilO\big(\!\sqrt{SPD}\big)$ rounds, where $SPD$ is the shortest path diameter. 
	
	Subsequently, \cite{Kuhn2020} improves APSP to $\tilO(n^{1/2})$ for an exact solution based on a scheme for efficiently unicasting messages in the network, which settled the problem (up to logarithmic factors in the round complexity). They also give an exact solution for SSSP in \smash{$\tilO(n^{3/5})$} rounds and constant stretch  approximations for $k$-SSP in \smash{$\tilO(n^{1/3}+\sqrt{k})$} rounds, which is tight with their lower bound of \smash{$\tilOm(\!\sqrt k)$} rounds for $k \geq n^{2/3}$. 
	The stretch was later improved by \cite{CensorHillel2021} which combines the techniques of \cite{Kuhn2020} with a density sensitive approach. In particular, \cite{CensorHillel2021} gives exact solutions for $k \geq n^{2/3}$ \textit{random} sources in \smash{$\tilO(\!\sqrt k)$} rounds and for $n^{1/3}$-SSP in $\tilO(n^{1/3})$ rounds.
	
	In a subsequent work, \cite{CensorHillel2021a} uses density awareness in a different way to solve SSSP in \smash{$\tilO(n^{5/17})$} rounds for stretch $(1\!+\!\eps)$. 
	A deterministic protocol for efficiently broadcasting local edges is given by \cite{Anagnostides2021} and then used to obtain a deterministic APSP-algorithm with stretch \smash{$\frac{\log n}{\log\log n}$} in \smash{$\tilO(n^{1/2})$} rounds.
	For classes of sparse graphs (e.g., cactus graphs), \cite{Feldmann2020} demonstrates that $\polylog n$ solutions are possible even with \CONGEST as local network.
	
	\subsection{Contributions}

	Shortest paths problems are hard to solve if only either of the constituent communication modes of the \hybrid model can be used in isolation. In the \LOCAL model computing even crude approxiations for a single source requires $D_G$ rounds, where $D_G$ is the diameter of $G$ which can be linear in $n$. In the \NCC model, a single node learning its distance to $k$ source nodes has a lower bound of $\tilOm(k)$. 
	By contrast, the known bounds for the $k$-SSP problem (see Definition \ref{def:kSSP}) in the \hybrid model are \smash{$\tilOm(\!\sqrt{k})$} for $k \in [n]$ and \smash{$\tilO(\!\sqrt{k})$} for relatively large $k$ \cite{Augustine2020a, Kuhn2020}. This work focuses on closing the gap to the lower bound for the whole range of $k \in [n]$ (up to $\tilO(1)$ factors). In particular, we show the following.
	
	\begin{itemize}
		\item There exists an algorithm that solves the single sources shortest paths problem with stretch $(1 \!+\! \eps)$ w.h.p.\ in $\tilO(1)$ rounds in the \hybrid model (Theorem \ref{thm:almost_shortest_sssp}).
		\item There exists an algorithm that solves the $k$ \textit{random} sources shortest paths problem with stretch $(1 \!+\! \eps)$ w.h.p.\ in \smash{$\tilO\big(\!\sqrt{k}\big)$} rounds in the \hybrid model (Thm.\!~\ref{thm:k-ssp}).
		\item There exists an algorithm that solves the $k$ sources shortest paths problem with stretch $(3 \!+\! \eps)$ w.h.p.\ in \smash{$\tilO\big(\!\sqrt{k}\big)$} rounds in the \hybrid model (Theorem \ref{thm:k-ssp}).
	\end{itemize}
	
	The currently fastest algorithms for constant stretch SSSP are by \cite{Augustine2020a} which takes  $\bigO(n^\eps)$ rounds (where $\eps$ is required to be constant to keep a constant stretch) and by \cite{CensorHillel2021a} for stretch $1 \p \eps$ in \smash{$\tilO(n^{5/17})$} rounds. Both solutions are exponentially slower than the $1 \p \eps$ approximation proposed in this work. The known lower bound for $k$-SSP of \smash{$\tilOm\big(\!\sqrt{k}\big)$}, which holds even for randomized algorithms and extremely large stretch in \smash{$\Omega\big(\!\sqrt{n}\big)$} was matched (up to $\tilO(1)$ factors) for $k \geq n^{2/3}$ by a series of articles \cite{Augustine2020a, Kuhn2020, CensorHillel2021}, with varying, constant stretches.
	However, tight algorithms below this threshold for $k$ have been elusive and this work answers the question whether the lower bound of \smash{$\tilOm\big(\!\sqrt{k}\big)$} can be matched for $k < n^{2/3}$ positively. The current complexity landscape of the $k$-SSP problem in \hybrid is given by Figure \ref{fig:kssp_complexity_overview}. Furthermore, we generalize our solutions for $k$-SSP for the \hybridpar{\infty}{\gamma} model and show that approximations can be obtained in \smash{$\tilO\big(\!\sqrt{k/\gamma}\big)$} rounds\footnote{Note that this is also tight up to $\tilO(1)$ factors due to a matching lower bound of \smash{$\tilO\big(\!\sqrt{k/\gamma}\big)$} by \cite{Schneider2023} for $k$-SSP, which is a slight generalization of the lower bound of \cite{Kuhn2020}.} (details in Theorem \ref{thm:k-ssp}), demonstrating the benefit of larger global capacity for solving $k$-SSP.

	In terms of technical contributions, we show that an interface model called  $\mathsf{Minor}$-$\mathsf{Aggregation}$ and an oracle to solve the $\mathsf{Eulerian}$-$\mathsf{Orientation}$ problem that were introduced by \cite{Rozhon2022} (for SSSP solutions in the $\mathsf{PRAM}$ model), can be efficiently implemented in the \hybrid model, which might be useful in case other problems have fast solutions in this interface model. Furthermore, we show that multiple graph algorithms can be scheduled efficiently in parallel on an appropriately sized skeleton graph. This is useful for our purposes since we can reduce $k$-SSP to multiple instances of SSSP on such a skeleton graph, but might prove equally useful for other graph problems that can be broken into smaller problems on skeleton graphs in a similar way.

	\section{Almost Optimal SSSP in Polylogarithmic Time}
	\label{sec:sssp_logtime}
	
	In this section we show how  to obtain an almost optimal $\tilO(1)$ round algorithm for the \hybrid model by adapting the techniques of \cite{Rozhon2022}. In particular, we will prove the following theorem.
	
	\begin{theorem}
		\label{thm:almost_shortest_sssp}
		A $(1 \p \eps)$-approximation of SSSP can be computed in $\tilO(1/\eps^2)$ rounds in the \hybrid model, w.h.p.
	\end{theorem}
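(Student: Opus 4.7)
The plan is to invoke the framework of \cite{Rozhon2022}, which reduces $(1\!+\!\eps)$-approximate SSSP to $\tilO(1/\eps^2)$ calls of two abstract primitives: a Minor-Aggregation primitive and an Eulerian-Orientation oracle. In the framework, $V$ is partitioned into connected super-nodes of a virtual minor of $G$, and each invocation asks every super-node to compute an $\bigO(\log n)$-bit aggregate of values held by its constituent vertices and by its incident super-edges (with a few analogous variants); some invocations additionally ask for a consistent orientation of an Eulerian subgraph. If both primitives can be executed in $\tilO(1)$ rounds in \hybrid, the theorem follows by direct simulation.

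First I would implement Minor-Aggregation. The subtlety is that super-nodes are connected subgraphs of $G$ but can have arbitrarily large local diameter, so a BFS-style aggregation inside a super-node via the local channel is too slow. Instead I elect a leader per super-node (say, the minimum-ID constituent). Every vertex first learns the identifier of its leader; then each vertex sends its value tagged with the leader ID to that leader through the global channel, the leaders aggregate, and broadcast the result back. Aggregation over incident super-edges is analogous: each edge $\{u,v\}$ whose endpoints lie in different super-nodes contributes one token carrying both leader IDs and the edge value to one of the two leaders. The enabling primitive is the $\tilO(1)$-round token-routing result of \cite{Augustine2020a, Kuhn2020}, which delivers any set of $\bigO(n \polylog n)$ messages with arbitrary source-destination pairs through the global network in $\tilO(1)$ rounds.

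For the Eulerian-Orientation oracle I would build an Euler tour of each connected component of the given subgraph by standard pointer jumping on a linked-list representation of the half-edges, using $\bigO(\log n)$ rounds of the above routing primitive, and then orient each edge according to its direction along the tour. Combining both implementations, each primitive call costs $\tilO(1)$ rounds of \hybrid, and the $\tilO(1/\eps^2)$ bound on the number of primitive calls from \cite{Rozhon2022} yields the theorem.

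The main obstacle is the first sub-task of Minor-Aggregation, namely letting every vertex learn the identifier of its super-node leader when the super-node may span $\Omega(n)$ hops in $G$: this is a connected-components problem on the subgraph induced by the intra-super-node edges. I plan to solve it by $\bigO(\log n)$ parallel pointer-jumping phases coupled with the global routing primitive, so that in each phase every vertex updates its tentative leader to the minimum ID it has learnt so far. Controlling congestion when many vertices simultaneously query the same ancestor is the delicate point; once the leader-assignment step is in place, the remaining parts of Minor-Aggregation and the Eulerian-Orientation routine are essentially routine applications of global routing.
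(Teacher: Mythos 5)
Your high-level strategy coincides with the paper's: reduce to the $\mathsf{Minor}$-$\mathsf{Aggregation}$ framework of \cite{Rozhon2022} and implement its two primitives in \hybrid. However, both of your implementations rest on a routing claim that does not hold. In \hybrid the global capacity is $\tilO(1)$ bits \emph{per node} per round, so there is no primitive that delivers $\bigO(n \polylog n)$ messages with \emph{arbitrary} source--destination pairs in $\tilO(1)$ rounds: the results of \cite{Augustine2020a, Kuhn2020} require the messages to be (nearly) balanced among receivers as well as senders. Your plan to have every constituent vertex of a super-node send its value directly to a single leader makes that leader the receiver of up to $\Omega(n)$ messages, which needs $\tilOm(n)$ rounds of global communication. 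The same objection applies to the super-edge aggregation and to each pointer-jumping phase of your leader-election step. The fix is exactly the paper's: build, per super-node, an overlay tree of \emph{constant degree} and $\bigO(\log n)$ diameter (Lemma \ref{lem:overlay}, due to \cite{Goette2021}) over the contracted edges, and perform consensus and aggregation by converge-cast along that tree, so that every node sends and receives only $\tilO(1)$ global messages per round. Leader election and the connected-components issue you flag are subsumed by that construction; without it, the congestion problem you yourself identify as ``the delicate point'' is not resolved.

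The Eulerian-orientation oracle has the analogous, and more serious, gap. A linked-list / Euler-tour representation assigns each node as many half-edges as its degree, so a node of degree $d$ must send and receive $\Theta(d)$ messages in every pointer-jumping phase; with $\tilO(1)$ global capacity this costs $\tilO(\Delta)$ rounds, which can be $\tilOm(n)$. This load imbalance is precisely why the paper does not simulate a parallel cycle-decomposition algorithm directly. Instead it first \emph{reduces the arboricity}: it computes a $(\bigO(\log n), \bigO(\log n))$ network decomposition of $G^2$, lets each extended cluster locally (via the unbounded local edges) orient and delete all cycles it fully contains, and only then simulates the CRCW $\mathsf{PRAM}$ algorithm of \cite{Atallah1984} on the residual graph, whose arboricity is $\bigO(\log n)$, using the $\tilO(a)$-round simulation of \cite{Feldmann2020} (Lemma \ref{lem:sim-pram}). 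Some step of this kind --- shedding the bulk of the edges through the local network before any global-communication-bound routine touches them --- is missing from your argument and is the essential idea; as written, your oracle implementation fails on any graph with a high-degree vertex.
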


	The goal of \cite{Rozhon2022} is the computation of an almost shortest path tree in the $\mathsf{PRAM}$ model with linear work and $\tilO(1)$ depth. 
	Their main machinery is the simulation of an interface model, called $\mathsf{Minor}$-$\mathsf{Aggregation}$ model, defined as follows. Let $G=(V,E)$ an undirected input graph. In a difference to the synchronous message passing model (Definition \ref{def:sync_msg_passing}) nodes \textit{and edges} are assumed to be individual computational units that communicate in synchronous rounds and conduct arbitrary local computations in each round. Initially, nodes know their ID of size $\tilO(1)$ and edges know the IDs of their endpoints. Each round, communication occurs by conducting the following operations (in that order).
	
	\begin{itemize}
		\item \textbf{Contraction:} Each edge e chooses a value $c_e \in \{\top, \bot\}$, which defines a minor network $G' = (V', E')$ where all edges with $c_e = \top$ are contracted. This forms a set supernodes \smash{$V'\subseteq 2^{V}$}, where a supernode $s \in V'$ consists of nodes connected by contracted edges. For each $e \in E$ that connects nodes of distinct supernodes, there is a corresponding edge in $E'$. 
		\item \textbf{Consensus:} Each $v \in V$ chooses a $\tilO(1)$-bit value $x_v$. For each supernode $s \in V'$, let \smash{$y_s := \bigoplus_{v \in s} x_v$}, where $\bigoplus$ is some pre-defined aggregation operator. All
		$v \in s$ learn $y_s$.
		\item \textbf{Aggregation:} Each edge $e \in E'$ connecting supernodes $a \in V'$ and $b \in V'$ learns $y_a$
		and $y_b$, and chooses two $\tilO(1)$-bit values $z_{e,a}$, $z_{e,b}$ (i.e., one for each endpoint). Every node $v \in s$ of each supernode $s \in V'$ learns the aggregate of its incident edges in $E'$, i.e.,
		\smash{$\bigotimes_{e \in \text{incidentEdges}(s)} z_{e,s}$} where $\bigotimes$ is some pre-defined aggregation operator. All nodes $v \in s$
		learn the same aggregate value (if the aggregate is not unique).
	\end{itemize}
	
	The work of \cite{Rozhon2022} shows that SSSP can be solved with a stretch of $(1 \p \eps)$ in $\tilO(1/\eps^2)$ rounds of the $\mathsf{Minor}$-$\mathsf{Aggregation}$ model, \textit{if} it can call on an oracle $\mathcal O^{\text{Euler}}$ that solves the $\mathsf{Eulerian}$-$\mathsf{Orientation}$ problem once per round (we define this oracle formally further below). 
	
	\begin{lemma}[cf., \cite{Rozhon2022}] 
		\label{lem:sssp_minor_aggregation}
		A $(1 \p \eps)$-approximation of SSSP on $G$ can be computed with a total of $\tilO\big(1/\eps^2\big)$ rounds of $\mathsf{Minor}$-$\mathsf{Aggregation}$ model and calls to the oracle $\mathcal O^{\text{Euler}}$ on certain Eulerian graphs $H$ (see Definition \ref{def:euler_oracle}), respectively.
	\end{lemma}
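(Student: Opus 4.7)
The plan is to extract the Minor-Aggregation round complexity directly from the analysis of \cite{Rozhon2022}, whose PRAM algorithm for $(1\!+\!\eps)$-approximate SSSP is built by first designing a round-efficient procedure in an essentially equivalent interface model and then simulating that interface in $\mathsf{PRAM}$. Since here we are not concerned with $\mathsf{PRAM}$ simulation, the task reduces to restating their upper bound in the three primitives listed above (contraction, consensus, aggregation), together with explicit calls to the oracle $\mathcal{O}^{\text{Euler}}$ for the Eulerian-type subroutines they use internally.

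To see that their algorithm genuinely decomposes into these three primitives, I would recall that their procedure maintains a tentative distance label $\tild(v)$ at every node and refines the labels over $\tilO(1/\eps^2)$ ``relaxation'' phases keyed to a distance-scale decomposition of $G$. Each phase begins by carving balls (a low-diameter clustering) around a random set of vertices; the resulting cluster structure is recorded by setting $c_e=\top$ exactly for those edges that should lie inside a cluster, which yields the requisite minor $G'$ whose supernodes are the clusters. The consensus primitive is then used to determine the best current distance estimate within each supernode (a $\min$-aggregate of the $\tild(v)$'s), and the aggregation primitive over the edges of $G'$ is used to propagate improved estimates across cluster boundaries to the incident supernodes. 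Since the decomposition and the estimates have descriptions of $\tilO(1)$ bits, each phase consumes $\tilO(1)$ rounds of the interface, for a total of $\tilO(1/\eps^2)$ rounds.

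The Eulerian-Orientation oracle enters because certain steps in \cite{Rozhon2022} rely on routing and flow-style subroutines on auxiliary Eulerian graphs $H$ extracted from the current minor (for instance, cycle-cover constructions used to build low-congestion shortcuts along tree-like structures). These steps are not implementable with the three local primitives alone, so one simply exposes them as queries to $\mathcal{O}^{\text{Euler}}$ in the statement; it is then necessary to verify that the graphs $H$ fed to the oracle satisfy the hypotheses of Definition \ref{def:euler_oracle}. The main -- essentially bookkeeping -- obstacle is exactly this translation: checking that every step of their algorithm decomposes into our three primitives while only exchanging $\tilO(1)$-bit values, and that the Eulerian instances handed to $\mathcal{O}^{\text{Euler}}$ are of the form the oracle expects. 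No new algorithmic idea is needed beyond a careful reading of \cite{Rozhon2022} through the lens of the interface described in this section.
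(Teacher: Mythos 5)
The paper offers no proof of this lemma at all: it is imported verbatim from \cite{Rozhon2022} (hence the ``cf.''), so your plan of simply deferring to that work and checking that its algorithm fits the stated interface is exactly what the paper does. One caution: your sketched account of the internals (Bellman--Ford-style relaxation phases with ball carving and boundary propagation) does not match what \cite{Rozhon2022} actually do --- their algorithm boosts an $\ell_1$-oblivious-routing/transshipment solver within the $\mathsf{Minor}$-$\mathsf{Aggregation}$ framework, and the Eulerian orientations arise in their deterministic construction of the aggregation/shortcut machinery rather than in ``flow-style routing'' steps --- but since the lemma is invoked as a black box, this inaccuracy does not affect the citation-level argument.
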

	
	Our goal is to apply this lemma to show Theorem \ref{thm:almost_shortest_sssp}. For this we have to prove that we can (1) efficiently simulate the $\mathsf{Minor}$-$\mathsf{Aggregation}$ model in \hybrid and 
	(2) efficiently implement the oracle $\mathcal O^{\text{Euler}}$ in \hybrid.

	\subsection{Simulation of $\mathsf{Minor}$-$\mathsf{Aggregation}$ in \hybrid}
	
	We show in the following that the $\mathsf{Minor}$-$\mathsf{Aggregation}$ model can be implemented efficiently in the \hybrid model. One of the techniques we employ is  the following result by \cite{Goette2021}, which states that an ``overlay network'' can be computed with diameter, degree and round complexity \smash{$\tilO(1)$}.
	
	\begin{lemma}[see \cite{Goette2021}]
		\label{lem:overlay}
		Given that nodes know their neighbors in a graph $G = (V,E)$ as problem input and a polynomial upper bound of $n$, a rooted tree $T=(V,E_T)$ (usually not a sub-graph of $G$) with constant degree and diameter $\bigO(\log n)$ can be computed in $\bigO(\log n)$ rounds, w.h.p., in the \hybrid model (\NCC suffices).
	\end{lemma}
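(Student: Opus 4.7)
My plan is to construct the overlay tree in two stages: first, build a random auxiliary overlay in which every node knows $\polylog n$ other nodes and whose diameter is $\bigO(\log n)$; second, distill this overlay into a rooted tree of constant degree and $\bigO(\log n)$ diameter.

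For the first stage, I would have each node sample $\Theta(\log n)$ target IDs uniformly from $[N]$, where $N$ is the given polynomial upper bound on $n$, and ping each one through the global channel. Since identifiers lie in $[n]\subseteq [N]$, pings directed at valid IDs are acknowledged, so after one round every node has learned a random subset of $V$. Iterating this sampling together with a gossip step in which each node forwards a small random subset of the IDs it has learned to freshly sampled destinations, a standard expansion/Chernoff argument shows that after $\bigO(\log n)$ rounds the union of discovered contacts forms a random graph on $V$ whose connectivity implies $\bigO(\log n)$ diameter and in which every node has degree at most $\polylog n$, w.h.p.

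For the second stage, I would elect a leader in the auxiliary overlay (for instance, the node of minimum ID) by an $\bigO(\log n)$-round tournament using pointer jumping over contact lists, and then grow a BFS spanning tree from the leader across $\bigO(\log n)$ additional rounds. Since intermediate nodes can have polylogarithmic degree, I would replace each high-degree node $v$ by a small balanced binary mini-tree that spans $v$ together with its children; $v$ computes this mini-tree locally once it has collected its children's IDs, and then informs each endpoint of its designated parent/children pointers via global messages. Concatenating these mini-trees yields a global tree of constant degree whose depth grows by at most an $\bigO(\log n)$ factor, absorbed into the final $\bigO(\log n)$ diameter bound.

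The main obstacle I expect is respecting the \NCC congestion limit of $\bigO(\log^2 n)$ bits per node per round at every step. Operations where a node would naively collect or disseminate many IDs at once — for example, a would-be parent aggregating all its children's IDs before building its mini-tree, or a high-in-degree root of the BFS phase — must be pipelined or delegated to a chain of $\polylog n$ helpers so that no node ever sends or receives more than $\polylog n$ bits in any single round. Verifying that this load balancing can be carried out throughout both stages without inflating the round count beyond $\bigO(\log n)$ is the technical heart of the argument.
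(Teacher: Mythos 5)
The paper does not prove this lemma at all; it is imported verbatim from the cited work of G\"otte et al., so there is no in-paper argument to compare against. Judged on its own merits, your proposal has a genuine gap in its first stage. You propose that each node sample $\Theta(\log n)$ identifiers uniformly from $[N]$, where $N$ is only a \emph{polynomial upper bound} on $n$, and ping them. If $N$ is loose, say $N = n^2$, a uniformly random element of $[N]$ is a valid identifier with probability $n/N = 1/n$, so across the whole network and all $\bigO(\log n)$ rounds the expected number of successful contacts is $n \cdot \polylog n \cdot (n/N) = \polylog n$ --- almost every node learns nothing, and no connected auxiliary overlay ever forms. (Note that the lemma cannot be assuming identifiers are exactly $[n]$: in that case the statement is trivial, since node $i$ attaches to node $\lceil i/2 \rceil$ in one round, and the ``polynomial upper bound'' hypothesis would be vacuous.) The problem is compounded by the parenthetical ``\NCC suffices'': in the \NCC model a node may only address global messages to nodes whose identifiers it has already \emph{learned} during the execution (initially, its neighbors in $G$), so pinging guessed identifiers is not even a legal operation there. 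This is precisely why the actual construction does not sample identifiers but instead grows knowledge outward from the input edges: it repeatedly merges adjacent supernodes so that their sizes (roughly) double each phase, maintaining a constant-degree, low-diameter internal structure for each supernode so that the merge can be coordinated within the congestion budget; after $\bigO(\log n)$ phases a single supernode remains and is reorganized into the desired tree.

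Your second stage is sound in spirit and in fact mirrors a standard step: replacing a node of polylogarithmic degree by a balanced binary mini-tree over its children costs only an $\bigO(\log\log n)$ depth factor per level and keeps the degree constant, and you correctly identify the per-round congestion bookkeeping as the place where care is needed. But since stage two takes the connected, low-diameter auxiliary overlay as its input, and stage one cannot produce that overlay under the stated assumptions, the argument as a whole does not go through without replacing the sampling step by an edge-driven merging (or doubling) process.
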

	
	We proceed with the simulation of the $\mathsf{Minor}$-$\mathsf{Aggregation}$ model in \hybrid.
	
	\begin{lemma}
		\label{lem:simulate_minor_agregation}
		A round of the $\mathsf{Minor}$-$\mathsf{Aggregation}$ model can be simulated in $\tilO(1)$ rounds in the \hybrid model, w.h.p.
	\end{lemma}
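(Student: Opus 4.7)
The plan is to implement each of the three sub-operations of a $\mathsf{Minor}$-$\mathsf{Aggregation}$ round---Contraction, Consensus, and Aggregation---using a single round of local communication across contracted edges together with $\tilO(1)$ rounds of global communication on per-supernode overlay trees built via Lemma~\ref{lem:overlay}.

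First, once the edges have chosen their values $c_e$, both endpoints of every edge exchange $c_e$ in one local round, so every node $v$ knows which of its incident edges are contracted and hence knows its neighbors in the subgraph induced by its supernode $s(v)$. I would then invoke Lemma~\ref{lem:overlay} \emph{in parallel} on the contracted subgraph of each supernode $s\in V'$, yielding for every $s$ a rooted overlay tree $T_s$ of constant degree and diameter $\bigO(\log n)$ in $\tilO(1)$ rounds. The key point is that each node belongs to exactly one supernode, so the $\tilO(1)$-bit global-bandwidth budget used per node per round by Lemma~\ref{lem:overlay} is consumed by at most one such invocation simultaneously and the parallel executions remain within the \hybrid model's bandwidth cap.

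For Consensus, pipeline a convergecast of the values $x_v$ up $T_s$ under the operator $\bigoplus$ and then broadcast the resulting $y_s$ back down; both passes finish in $\bigO(\log n)$ rounds because $T_s$ has constant degree and logarithmic diameter, and the per-node traffic is $\tilO(1)$. For Aggregation, each endpoint of every non-contracted edge $e=(u,v)$ now knows $y_{s(u)}$ and $y_{s(v)}$; the two endpoints swap these values in a single local round and then deterministically reproduce the edge's outputs $z_{e,s(u)},z_{e,s(v)}$ consistently on both sides (the edge's computation depends only on its endpoints' IDs and on the exchanged $y$-values, all of which are common knowledge). Each node then locally $\bigotimes$-aggregates the $z_{e,s(v)}$ values over its incident non-contracted edges, and a second up/down pass along $T_s$ aggregates these contributions across the whole supernode and broadcasts the result to every $v\in s$.

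The main technical obstacle is verifying that the simultaneous invocations of Lemma~\ref{lem:overlay} for all supernodes do not violate the \hybrid model's per-node global-bandwidth bound. This reduces to the observation that the supernodes partition $V$, so a given node is active in exactly one invocation and its bandwidth usage is simply that of Lemma~\ref{lem:overlay} applied once; the same superposition argument covers the two subsequent convergecast/broadcast passes on the trees $T_s$, since each node has constant degree in its own $T_s$ and no other. Putting the pieces together yields the claimed $\tilO(1)$-round simulation of one $\mathsf{Minor}$-$\mathsf{Aggregation}$ round w.h.p.
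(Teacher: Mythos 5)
Your proposal matches the paper's proof in both structure and all key ingredients: simulating each edge at both endpoints, building a low-degree, low-diameter overlay tree per supernode via Lemma~\ref{lem:overlay} on the contracted subgraphs (which is safe in parallel because supernodes partition $V$), convergecast/broadcast on these trees for Consensus and Aggregation, and a single local exchange of the $y$-values across each inter-supernode edge. The argument is correct and essentially identical to the paper's.
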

	
	\begin{proof}
		Since nodes in the distributed model are computationally unlimited, each edge can be simulated by both of its endpoints simultaneously. We compute a tree $T_s$ with diameter, degree and round complexity \smash{$\tilO(1)$} on each connected component $s \in V'$ using the set of contracted edges ($e\in E$ with $c_e = \top$) as input, see Lemma \ref{lem:overlay}. Note that the small diameter and degree of $T_s$ allows us efficient computation of aggregation operations by conducting converge-casts on $T_s$ in $\tilO(1)$ rounds.
		This also allows us to implement the consensus step by converge casting the values $y_s$ of for all $s \in V'$ to all $v \in s$.
		
		As part of the simulation, we assume that each edge $e = \{s_1,s_2\} \in E'$ is simulated by \textit{both} of its {incident} nodes $v_1 \in s_1,v_2 \in s_2$. The values $y_{s_1}, y_{s_2}$ from the consensus step can be exchanged between $v_1,v_2$ with their shared edge in a single round so both can continue the correct simulation of $e$. This is required in order to choose values $z_{e,a}$, $z_{e,b}$, which may depend on $y_{s_1}, y_{s_2}$. The aggregation of values $z_{e,s}$ of all $e \in E'$ incident to some $s\in V'$ is implemented as before.
	\end{proof}
	
	\subsection{Implementation of the Oracle $\mathcal O^{\text{Euler}}$}
	
	The oracle $\mathcal O^{\text{Euler}}$ solves the $\mathsf{Eulerian}$-$\mathsf{Orientation}$ problem on certain Eulerian graphs $H$ (i.e., graphs containing an Eulerian cycle), which requires that all edges are oriented such that in- and out-degree of each node are equal.
	More specifically, the requirements towards the oracle $\mathcal O^{\text{Euler}}$ in conjunction with the input graph $G$ (the one which we want to solve SSSP on) are as follows.
	
	\begin{definition}
		\label{def:euler_oracle}
		Let $H$ be an Eulerian graph that is a sub-graph of some $H'$, where $H'$ is obtained by adding at most $\tilO(1)$ arbitrarily connected ``virtual'' nodes to $G$.	
		Virtual edges, i.e., edges incident to at least one virtual node, are given in distributed form: an edge between a virtual and a real node is known by the latter, and edges between two virtual nodes are known by every node. Then $\mathcal O^{\text{Euler}}$ outputs an orientation of edges of $H$ such that in- and out-degree of each node are equal.
	\end{definition}
	
	Our goal is to show that we can also efficiently implement a call of $\mathcal O^{\text{Euler}}$ in \hybrid. Our goal is to re-purpose the PRAM-model algorithm by \cite{Atallah1984} for this task. The $\mathsf{PRAM}$ model assumes multiple processors and a shared memory containing the input. Processors can write or read a memory cell in each step subject to some restrictions. Each cell may contain a real value but in the SSSP problem with polynomially bounded, integer weights, $\tilO(1)$ bits per cell suffice. 
	The performance of a $\mathsf{PRAM}$ algorithm is measured in the total number of processing steps $N$ (work) and the marginal number of required parallel steps $T$ (depth) if an arbitrary number of processors can be used. In the exclusive read, exclusive write variant (EREW) each memory cell can be read or written by at most one processor in each step. 
	
	Since the memory is the only means by which processors can exchange information, intuitively, an EREW $\mathsf{PRAM}$ algorithm that uses at most $\tilO(n)$ processors and memory cells can be simulated in  $\tilO(1)$ rounds via the $\NCC$ model, where each node simulates $\tilO(1)$ processors \textit{and} cells of the shared memory and \textit{exclusive} access to each cell is provided via the global network (\NCC model). The simulation of  $\mathsf{PRAM}$ algorithms in \hybrid requires $\tilO(N/n+T)$ rounds if each node has the same amount of cells to simulate. 
	
	In \cite{Feldmann2020} it is shown that the simulation of EREW $\mathsf{PRAM}$ can also be done if the input is the local network $G$, however this adds a term of $\tilO(a)$ to the running time, where $a$ is the arboricity of $G$,\footnote{The arboricity of $G$ is defined as the smallest number of forests required to cover all edges of $G$.} which reflects imbalances in the number of memory cells that nodes must simulate. Note that $a$ can be linear in $n$. Their result carries over to the concurrent read/write (CRCW) variant, which can be simulated with EREW incurring only $\tilO(1)$ slowdown. The result of \cite{Feldmann2020} implies the following lemma.
	
	\begin{lemma}[cf.\ \cite{Feldmann2020}]
		\label{lem:sim-pram}
		A CRCW $\mathsf{PRAM}$ algorithm that solves a graph problem on $G=(V,E)$ with depth $T$ using $n = |V|$ processors can be simulated in $\tilO(a)$ rounds in the \hybrid model, w.h.p., where $a$ equals the arboricity of $G$.
	\end{lemma}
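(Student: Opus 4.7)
The plan is to compose two standard ingredients and then appeal to the earlier work of \cite{Feldmann2020} for the arboricity-sensitive part. First, I would invoke the classical simulation of a CRCW $\mathsf{PRAM}$ on $n$ processors by an EREW $\mathsf{PRAM}$ on $n$ processors, which sorts concurrent accesses to the same cell in a binary-tree fashion and serves them sequentially; this increases work by at most a logarithmic factor and depth by $\bigO(\log n)$, which is absorbed into the $\tilO$ notation. This reduces the task to simulating an EREW $\mathsf{PRAM}$ with $n$ processors and $\tilO(n)$ memory cells per simulated step in the \hybrid model.

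Second, I would unpack the EREW simulation for graph problems. The natural initial memory layout places $\Theta(\deg(v))$ input cells (one per incident edge) at node $v$. Since each node is restricted to $\bigO(\log^2 n)$ bits of global communication per round (the \NCC component of \hybrid), serving all memory accesses to cells hosted at a node of degree $\Delta$ would cost $\tilO(\Delta)$ per simulated step, which is prohibitive for high-degree graphs. The key observation from \cite{Feldmann2020} is that the edges of $G$ can be oriented such that each node has outdegree at most $\bigO(a)$ (a consequence of the Nash--Williams arboricity bound), and such an orientation can be computed in $\tilO(a)$ rounds in \hybrid. Once each edge is assigned to a unique owning endpoint, every node hosts at most $\tilO(a)$ input cells, and the remaining PRAM memory can be distributed uniformly across the $n$ nodes via the $\tilO(1)$-diameter overlay tree from Lemma~\ref{lem:overlay}. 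Each subsequent EREW step then requires each node to service only $\tilO(1)$ memory requests on average, which can be routed through the overlay in $\tilO(1)$ global rounds.

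Putting the pieces together, the preprocessing cost is $\tilO(a)$ for computing the orientation and rebalancing the memory, after which each of the $T$ simulated PRAM steps costs $\tilO(1)$ \hybrid rounds. The total is $\tilO(a+T)$, which is $\tilO(a)$ whenever $T$ is polylogarithmic in $n$ (as is the case for the algorithm invoked via the oracle $\mathcal{O}^{\text{Euler}}$).

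The main obstacle is the initial arboricity-respecting rebalancing: one must compute an orientation with outdegree $\bigO(a)$ and redistribute the $\Theta(m)$ input cells to $n$ owners such that no node exceeds its \NCC bandwidth at any point, while also maintaining a consistent global cell-to-owner mapping so that EREW accesses can be routed without collisions. Once this bookkeeping is in place, both the CRCW-to-EREW reduction and the per-round simulation are routine; the remaining correctness and bandwidth arguments follow directly from \cite{Feldmann2020}.
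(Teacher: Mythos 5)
Your proposal is correct and matches the paper's treatment: the paper offers no standalone proof of this lemma, instead citing \cite{Feldmann2020} for the EREW simulation with graph-structured input (where the $\tilO(a)$ term arises from rebalancing the degree-dependent memory load via a low-outdegree orientation) and noting that CRCW reduces to EREW with only $\tilO(1)$ overhead --- exactly the two ingredients you compose. Your additional observation that the bound is really $\tilO(a+T)$, collapsing to $\tilO(a)$ only because the invoked algorithm has $T \in \tilO(1)$, is a fair and careful reading of the (loosely stated) lemma.
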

	
	We break the problem of Eulerian orientation down to sub problems with suitable properties so that in a final step we can efficiently simulate the PRAM-model algorithm by \cite{Atallah1984}.
	
	\begin{lemma} 
		\label{lem:sim_euler_oracle}
		A call of the oracle $\,\mathcal O^{\text{Euler}}$ (see Definition \ref{def:euler_oracle}) can be implemented in $\tilO(1)$ rounds in the \hybrid model.
	\end{lemma}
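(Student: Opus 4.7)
The plan is to reduce the Eulerian orientation of $H$ to a cyclic list-ranking task on a $\bigO(1)$-arboricity auxiliary graph, and then solve that task by simulating the PRAM algorithm of \cite{Atallah1984} via Lemma \ref{lem:sim-pram}. The underlying idea is that pairing up the incident edges at every vertex of an Eulerian graph splits it into edge-disjoint cycles, and any cyclic orientation of each such cycle induces a valid Eulerian orientation.

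First I would use Lemma \ref{lem:overlay} to build a constant-degree, $\tilO(1)$-diameter overlay tree in $\tilO(1)$ rounds and designate one real simulator per virtual node. A small number of overlay aggregations lets every real node $v$ incident to a virtual node $u$ learn its rank among $u$'s real neighbors (ordered by ID); this sidesteps the fact that $u$ may have linearly many real neighbors, which would overwhelm any centralized routing of adjacency lists. Each real vertex then pairs up its (evenly many) incident $H$-edges locally; each virtual vertex pairs its edges by the global rule ``rank $2i\m 1$ with rank $2i$'', and its $\tilO(1)$ virtual-virtual edges (known to everybody) are paired by any fixed rule. Splitting every vertex according to its chosen pairs yields a $2$-regular multigraph $H^\bullet$ whose connected components form an edge-disjoint cycle decomposition of $H$; in particular, $H^\bullet$ has arboricity at most $2$.

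Next I would orient each cycle of $H^\bullet$ consistently. Running the EREW PRAM algorithm of \cite{Atallah1984} achieves this in polylogarithmic depth with linear work, and Lemma \ref{lem:sim-pram} turns that into a $\tilO(1)$-round \hybrid simulation, because the arboricity of the graph the algorithm operates on (namely $H^\bullet$) is $\bigO(1)$. A final round of communication along the local edges of $H$ lets both endpoints of each edge read off the orientation induced by the cycle in which it lies.

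The main obstacle I expect is the careful application of Lemma \ref{lem:sim-pram} in this regime: the lemma is phrased for PRAM algorithms whose input graph is the local network $G$, whereas we need to run the PRAM on $H^\bullet$, and the pair-vertices of $H^\bullet$ corresponding to a high-degree virtual node are not naturally located at any single real node. Distributing those pair-vertices across real simulators in a load-balanced way—so that the per-round global bandwidth of $\bigO(\log^2 n)$ bits suffices during the pointer-jumping steps of \cite{Atallah1984}—is the key technicality, and it is precisely what the rank-based pairing from the first step is designed to support.
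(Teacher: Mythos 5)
There is a genuine gap, and it lies exactly where you placed your ``main obstacle'': the application of Lemma \ref{lem:sim-pram} to $H^\bullet$. That lemma's $\tilO(a)$ bound is for a PRAM instance with $n$ processors and $\tilO(n)$ memory cells distributed over the $n$ nodes of the local network, where the arboricity $a$ simultaneously bounds the input imbalance \emph{and} the number of edges per node (so the total work is $\tilO(an)$). Your graph $H^\bullet$ has $m=|E(H)|$ vertices and $m$ edges, and $m$ can be $\Theta(n^2)$. Its arboricity being $2$ is therefore a red herring: even with a perfectly load-balanced assignment of pair-vertices to real simulators, each real node must host $\Theta(m/n)$ elements of the list-ranking instance, and each pointer-jumping step of \cite{Atallah1984} moves $\Theta(m)$ words between essentially arbitrary pairs of elements. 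The aggregate global bandwidth of the \hybrid model is only $\tilO(n)$ words per round, and after the first few jumps the pointer targets are no longer nearby in $G$, so the local network cannot absorb this traffic either. Hence your simulation costs $\tilOm(m/n)=\tilOm(n)$ rounds in the worst case, not $\tilO(1)$; no rank-based pairing or balancing scheme for the (real or virtual) high-degree vertices can reduce the total volume of global communication below $\Theta(m)$ words per jumping step.

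Your high-level decomposition (pair incident edges at each vertex, obtain edge-disjoint cycles, orient each consistently) is the same idea underlying both the cited PRAM algorithm and the paper's proof; the difference is where the $\Theta(m)$ work is absorbed. The paper first computes an $(O(\log n),O(\log n))$-network decomposition of $G^2$, collects each extended cluster's topology via the unbounded local edges in $O(\log n)$ rounds, and greedily orients and removes \emph{all} cycles lying inside an extended cluster, color class by color class (extended clusters of one color are node-disjoint, so this is parallel and consistent, and removing whole cycles preserves even degrees). What survives is a union of $O(\log n)$ forests, i.e., a graph of arboricity $O(\log n)$ with only $\tilO(n)$ edges, and only \emph{that} residual instance is fed to Lemma \ref{lem:sim-pram}. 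To repair your argument you would need a comparable step that disposes of all but $\tilO(n)$ edges of $H$ using the local network before invoking the PRAM simulation.
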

	
	\begin{proof}

		There is a CRCW $\mathsf{PRAM}$ algorithm that solves the $\mathsf{Eulerian}$-$\mathsf{Orientation}$ problem with linear work $\tilO(n\p m)$ ($m := |E|$) and depth $\tilO(1)$ \cite{Atallah1984}. 
		Note that in the $\mathsf{PRAM}$ simulation we can safely ignore the fact that some nodes and edges in $H$ are virtual (see Def.\ \ref{def:euler_oracle} for the definition of $H$), because these add at most $\tilO(1)$ nodes and $\tilO(n)$ edges, which can be evenly distributed and simulated by real nodes.
		Our goal is to reduce the problem into sub-problems on sub-graphs of smaller arboricity using the local network, which admits a solution in $\tilO(1)$ rounds simulating the $\mathsf{PRAM}$ algorithm by \cite{Atallah1984} using the simulation in \hybrid (Lemma \ref{lem:sim-pram}). 
		
		The idea is to reduce the arboricity of $H$ by greedily orienting disjoint cycles in $H$ on small subgraphs in parallel, which will ultimately leave us with a remaining graph of yet unoriented edges with small arboricity. Note that by orienting disjoint cycles consistently in one direction, the remaining graph of unoriented edges retains even node degree, i.e., it still has an $\mathsf{Eulerian}$-$\mathsf{Orientation}$. To identify disjoint cycles efficiently, we first compute a $(\chi,D)$-network decomposition. This is a partition of nodes into clusters of diameter $D$ such that the cluster graph (where two clusters are incident if a pair of nodes from each cluster are incident) has a valid $\chi$-coloring. A network decomposition with 
		$D,\chi \in \bigO(\log n)$ can be computed in $\tilO(1)$ rounds in the local network (\LOCAL), see \cite{Linial1991}.
		
		However, we compute such a network decomposition not on $G$, but on the power graph $G^{2} = (V,E')$, where any two nodes in $V$ at distance at most $2$ hops in $G$ share an edge in $E'$. Note that the round complexity is asymptotically the same as computing the network decomposition of $G$, since $G^{2}$ can be simulated in $2$ rounds in the local network. Let $C$ be a cluster of the network decomposition of $G^{2}$. Since two nodes in different clusters of the same color can not be incident in $G^2$, the distance in the \textit{original network} $G$ between such two nodes of  must bigger than $2$.	
		Let $C'$ be the extended cluster of $C$, defined as $C$ and all nodes within one hop of $C$. Note, due to the aforementioned distance property any two such \textit{extended} clusters of the same color are node-disjoint.
		
		For each color in $[\chi]$ and for each cluster $C$ with that color in parallel, all nodes in the extended cluster $C'$ learn $C'$ in $D \in \bigO(\log n)$ steps via the local network. This knowledge enables all nodes in $C'$ to consistently orient and greedily remove all cycles (w.r.t.\ $G$) in $C'$. Note that edges that are nominally removed will still be used to communicate in $C'$. Further, we can do this for each extended cluster $C'$ of the current color in parallel since they do not overlap. Each extended cluster is now cycle-free, thus all edges in all extended clusters of that color can be covered by a single forest.
		Since each node is in some cluster $C$ of some color, both endpoints of any \textit{edge} that was not removed are within some extended cluster $C'$, thus in the end all edges are covered by some forest. The number of forests $a$ corresponds to the number of colors $\chi \in \bigO(\log n)$. Thus the remaining graph can be oriented with the simulation of \cite{Feldmann2020} to run the $\mathsf{PRAM}$ algorithm by \cite{Atallah1984} in $\tilO(1)$ rounds.
	\end{proof}	
	
	We can now apply our implementation of  the $\mathsf{Minor}$-$\mathsf{Aggregation}$ model and the oracle $\mathcal O^{\text{Euler}}$ in \hybrid to the result by \cite{Rozhon2022}, hence, Theorem \ref{thm:almost_shortest_sssp} is proven by combining Lemma \ref{lem:sssp_minor_aggregation}, Lemma \ref{lem:simulate_minor_agregation} and Lemma \ref{lem:sim_euler_oracle}.

	\section{Shortest Paths with Multiple Sources}
	\label{sec:kssp}
	
	In this section we  give a direct application for the fast SSSP algorithm from the previous section by providing a framework to efficiently schedule multiple algorithms in parallel on a so called skeleton graph. Since our algorithmic solutions scale with the amount of global communication that we permit, we also give solutions for the more general \hybridpar{\infty}{\gamma} model. 
	In particular, we show that \smash{$\tilO\big(\!\sqrt{k/\gamma}\big)$} rounds suffice to compute approximations (given that the precision parameter $\eps$ is constant). The claim for the standard \hybrid model is implied by substituting $\gamma = \log^2 n$. Note that the general result is also tight (up to $\tilO(1)$ factors), as the lower bound of \smash{$\tilOm\big(\!\sqrt{k}\big)$} by \cite{Kuhn2020} can be generalized to \smash{$\tilOm\big(\!\sqrt{k/\gamma}\big)$} (see \cite{Schneider2023}).
	
	
	\begin{theorem}
		\label{thm:k-ssp}
		Let $\eps > 0$. The $k$-SSP problem can be approximated w.h.p.\
		\begin{itemize}
			\item in \hybrid in \smash{$\tilO\big(\!\sqrt{k}\cdot\tfrac{1}{\eps^2}\big)$} rounds for stretch $1 \p \eps$ and $k$ i.i.d. \emph{random} sources,
			\item  in \hybridpar{\infty}{\gamma} in $\tilO\big(\!\sqrt{k/\gamma}\cdot\tfrac{1}{\eps^2}\big)$ rounds for stretch $3 \p \eps$ and $k$ arbitrary sources,
			\item  in \hybridpar{\infty}{\gamma} in $\tilO\big(\tfrac{1}{\eps^2}\big)$ rounds for stretch $1 \p \eps$ and $k \leq \gamma$ arbitrary sources.
		\end{itemize}
	\end{theorem}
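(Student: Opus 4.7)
The plan is to reduce $k$-SSP to parallel invocations of the fast SSSP algorithm of Theorem \ref{thm:almost_shortest_sssp} on an appropriately sized skeleton graph, handling the three bounds in order of increasing difficulty.

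For the third bullet I would simply run $k$ copies of the algorithm of Theorem \ref{thm:almost_shortest_sssp} in parallel, one per source. Inspecting the two building blocks used by that algorithm---the $\mathsf{Minor}$-$\mathsf{Aggregation}$ simulation of Lemma \ref{lem:simulate_minor_agregation} and the Euler oracle of Lemma \ref{lem:sim_euler_oracle}---each uses only $\tilO(1)$ bits of global bandwidth per node per round. Running $k\le\gamma$ copies in parallel therefore costs $\tilO(k)\le\tilO(\gamma)$ bits per node per round, fitting the \hybridpar{\infty}{\gamma} budget; sharing a single overlay tree (Lemma \ref{lem:overlay}) across all $k$ instances keeps the total round complexity at $\tilO(1/\eps^2)$.

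For the first bullet I would use the $k$ i.i.d.\ random sources $S$ themselves as a skeleton. By a standard Chernoff/hitting-set argument, setting $h=\tilO(n/k)$, every shortest path of more than $h$ hops in $G$ meets $S$ w.h.p., so the skeleton graph $S'$ on $S$ with edge $\{s,s'\}$ weighted by $d_{G,h}(s,s')$ preserves pairwise distances between sources. The $k$-SSP problem then splits into (a) computing the edge weights of $S'$ and, as a by-product, the hop-bounded distance from every $v\in V$ to nearby sources, (b) running SSSP from each of the $k$ sources on $S'$, and (c) combining (a) and (b) at every non-source node via a triangle-inequality minimisation. Step (a) can be executed by $h$ rounds of local-network Bellman--Ford when $k$ is large enough that $n/k\le\tilO(\sqrt{k})$, and for smaller $k$ by running the fast SSSP from each source using the scheduling technique of step (b). Step (b) is the crux: I would schedule the $k$ skeleton-SSSP instances in $\sqrt{k}$ batches of $\sqrt{k}$ simultaneous executions, dedicating to each batch a fresh overlay tree and to each skeleton node a disjoint pool of $\tilO(n/k)$ real-node helpers so that every per-batch execution sees the full \hybrid bandwidth required by Theorem \ref{thm:almost_shortest_sssp}; each batch then takes $\tilO(1/\eps^2)$ rounds for a total of $\tilO(\sqrt{k}/\eps^2)$.

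For the second bullet the random-sources hypothesis is lost, so I would instead sample $\tilO(\sqrt{k\gamma})$ landmark nodes uniformly at random in $V$, build the skeleton on these landmarks, and route each arbitrary source $s$ through its nearest landmark $\ell(s)$; the triangle inequality $d(u,s)\le d(u,\ell(s))+d(\ell(s),s)$ yields a $(3+\eps)$-approximation, and the scheduling of the parallel skeleton-SSSPs mirrors part~(i) with parameters balanced against $\gamma$, giving $\tilO(\sqrt{k/\gamma}/\eps^2)$ rounds. The main technical obstacle throughout is precisely this parallel-scheduling analysis in parts~(i) and~(ii): one must certify that the $\sqrt{k}$ (resp.\ $\sqrt{k\gamma}$) simultaneous SSSP executions per batch can share the \hybrid network additively---and not multiplicatively---in global bandwidth, which requires dedicating disjoint overlay trees and disjoint helper-pool bandwidth to each instance and verifying that the Euler-orientation subroutine on each per-instance skeleton composes cleanly under this partition. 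If any of these interferences turned out to be multiplicative, one would only recover the naive $\tilO(k/\gamma)$ bound, which is off from the target by the crucial factor of $\sqrt{k/\gamma}$.
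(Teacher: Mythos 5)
Your overall strategy---reduce to many parallel SSSP instances on a skeleton, handle arbitrary sources via nearest-landmark proxies with a triangle-inequality argument for stretch $3\p\eps$---matches the paper, and your third bullet is essentially the paper's argument. But the parameterization of the skeleton in the first two bullets is where the real difficulty lies, and yours does not work. For the random-sources case you take the $k$ sources themselves as the skeleton, forcing $h\in\tilO(n/k)$; then merely constructing the skeleton edges and reaching the helper pools of size $\tilO(n/k)$ costs $\tilO(n/k)$ local rounds, which exceeds $\sqrt{k}$ for all $k<n^{2/3}$ (e.g.\ $k=\sqrt{n}$ gives $n/k=\sqrt n\gg n^{1/4}$). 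Your fallback for small $k$---``run the fast SSSP from each source using the scheduling technique of step (b)''---is circular, since scheduling $k$ SSSPs is exactly the problem being solved and the scheduling machinery presupposes the skeleton structure. The paper instead samples a \emph{denser} skeleton with probability $\sqrt{\gamma/k}$ (so $\tilde\Theta(n\sqrt{\gamma/k})$ skeleton nodes and $h\in\tilO(\sqrt{k/\gamma})$), \emph{adds} the $k$ random sources to it (which preserves the i.i.d.\ sampling property and, since $k\le n^{2/3}\le n/\sqrt k$, changes the skeleton size only by a $\tilO(1)$ factor), and falls back to the known $\tilO(n^{1/3}+\sqrt k)$ algorithm for $k\ge n^{2/3}$. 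With this choice every parameter balances at $\sqrt{k/\gamma}$: helper sets of size $\tilde\Theta(\sqrt{k/\gamma})$ within $\tilde\Theta(\sqrt{k/\gamma})$ hops, each helper simulating $\tilO(\sqrt{k\gamma})$ of the $k$ instances and flushing its global messages through capacity $\gamma$ in $\tilO(\sqrt{k/\gamma})$ rounds per simulated round. Your alternative of $\sqrt k$ sequential batches does not rescue this: within a batch the helpers of adjacent skeleton nodes are still $\tilOm(h)$ hops apart, so a batch cannot complete in $\tilO(1)$ rounds, and the product of batch count and per-batch cost overshoots $\sqrt k$.

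The second bullet has the analogous problem plus an omission. Sampling only $\tilO(\sqrt{k\gamma})$ landmarks puts the nearest landmark at hop distance $\tilO(n/\sqrt{k\gamma})$, which is far larger than the budget $\sqrt{k/\gamma}$ for most $k$; the correct landmark set is again the skeleton sampled with probability $\sqrt{\gamma/k}$, whose hitting-set property places a proxy $u_s$ within $h\in\tilO(\sqrt{k/\gamma})$ hops of every source $s$. More importantly, after the skeleton SSSPs finish, a node $v$ knows $\tild(v,u_s)$ but has no way to evaluate $\tild(v,u_s)+d_{h,G}(u_s,s)$ unless the identity of $u_s$ and the offset $d_{h,G}(u_s,s)$ are made globally known for all $k$ sources; the paper does this with $\gamma$ parallel token-dissemination instances broadcasting $k$ tokens in $\tilO(\sqrt{k/\gamma})$ rounds, a step your proposal skips entirely. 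Without it the final combination step at non-source nodes cannot be carried out.
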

	
	\subsection{Parallel Scheduling of Algorithms on a Skeleton Graph}
	\label{ssec:scheduling}
	
	Besides the scheme for SSSP introduced in Section \ref{sec:sssp_logtime}, our main technical component is to show that we can efficiently run multiple instances of independent algorithms on a so called skeleton graph in parallel. 
	A skeleton graph consists of skeleton nodes that are sampled with probability $1/x$ from $V$. Skeleton edges are created between skeleton nodes that have a path of length roughly $\tilO(x)$ in $G$ between them with the weight of such an edge corresponding to the distance of that path. 
	Such a skeleton graph has many useful properties, the main ones are that it preserves distances of $G$ and can be computed efficiently in a distributed sense, where each skeleton node knows its incident skeleton edges in $\tilO(x)$ rounds. The formal definition is given as follows.
	
	\begin{definition}[cf. \cite{Ullman1991,Augustine2020a}]
		\label{def:skeleton_graph}
		A skeleton graph $\calS = (V_\calS, E_\calS)$ of $G$, is obtained by sampling each node of $G$ to $V_\calS$ with prob.\ at least $\frac{1}{x}$. The edges of $\calS$ are $E_\calS \!=\! \{ \{u,v\} \!\mid\! u,v\!\in\!V_\calS, \text{hop}(u,v) \!\leq\! h\}$ (for some appropriate $h\in \tilO(x)$) with weights $w_\calS(u,v) = d_{h,G}(u,v)$ for $\{u,v\} \in E_\calS$.
	\end{definition}
	
	In \cite{Augustine2019} it is shown that $\calS$ gives a good approximation of the topology of the graph, in particular, that the distance between sampled nodes in the resulting skeleton graph equals the actual distance in the local graph w.h.p.
	
	\begin{lemma}[cf.\ \cite{Augustine2019}]
		\label{lem:skeleton-graph}
		A skeleton graph $\calS = (V_\calS, E_\calS)$ as given in Definition \ref{def:skeleton_graph} can be constructed in $h\in \tilO(x)$ rounds in the \LOCAL (and thus \HYBRID) model w.h.p. for an appropriately chosen $h\in \tilT(x)$ with the following properties. (a) For any $u,v \!\in\! V$ with $hop(u,v) \!\geq\! h$, there is a shortest path $P$ from $u$ to $v$, s.t. any sub-path $Q$ of $P$ with at least $h$ nodes contains a node in $V_\calS$ w.h.p. (b) For all $u,v, \in V_\calS:$ $d_\calS(u,v) = d_G(u,v)$ w.h.p.
	\end{lemma}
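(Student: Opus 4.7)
The plan is to sample $V_\calS$ by independent coin flips and build $E_\calS$ by an $h$-round local protocol with $h = \Theta(x \log n)$, then derive (a) from a Chernoff-type argument and derive (b) as a corollary of (a). For the construction, every $v \in V$ flips an independent coin and joins $V_\calS$ with probability $1/x$. In the \LOCAL model, the unbounded message size lets each node flood its identifier together with the current best $h$-hop distance estimate for $h$ rounds (a truncated Bellman--Ford). After $h$ rounds, every $u \in V_\calS$ knows, for each $v \in V_\calS$ with $\hop_G(u,v) \leq h$, the value $d_{G,h}(u,v)$; these are exactly its incident skeleton edges and their weights. The round count is $h \in \tilT(x)$ as required.

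For (a), fix, independently of the sampling, a canonical shortest path $P(u,v)$ between every pair $u,v \in V$ (e.g., by lexicographic tie-breaking on identifier sequences). For any fixed sub-path $Q$ consisting of exactly $h$ distinct nodes, the events ``node $w \in Q$ is sampled'' are independent, so the probability that $Q$ contains no skeleton node equals $(1-1/x)^h \leq e^{-h/x}$. Choosing $h = c\, x\, \ln n$ for a sufficiently large constant $c > 0$ bounds this by $n^{-c}$. Since there are at most $n^2$ pairs and at most $n$ windows per canonical path, a union bound over the $\bigO(n^3)$ events establishes (a) w.h.p.

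Property (b) follows from (a). The inequality $d_\calS(u,v) \geq d_G(u,v)$ is immediate because every skeleton edge weight $w_\calS(a,b) = d_{G,h}(a,b) \geq d_G(a,b)$ is realized by an actual walk in $G$, so any $\calS$-path corresponds to a walk in $G$ of equal total length. For the reverse direction, let $u,v \in V_\calS$. If $\hop_G(u,v) \leq h$ then $\{u,v\} \in E_\calS$ with $w_\calS(u,v) = d_{G,h}(u,v) = d_G(u,v)$. Otherwise, take the shortest path $P$ from (a) and walk along it from $u$, repeatedly taking the next skeleton node encountered. Property (a) guarantees that the intermediate (non-skeleton) stretch between two consecutive skeleton nodes on $P$ has fewer than $h$ nodes, so each consecutive pair $s_{i-1}, s_i$ satisfies $\hop_G(s_{i-1},s_i) \leq h$ and is therefore joined by a skeleton edge with $w_\calS(s_{i-1},s_i) \leq w(P_i)$, the length of the corresponding sub-path. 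Summing over the chain $u = s_0, s_1, \dots, s_t = v$ yields $d_\calS(u,v) \leq w(P) = d_G(u,v)$.

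The main subtlety I anticipate is keeping the union bound in (a) clean: the canonical shortest paths must be fixed before the coin flips so that the $h$ sampling events on each window are genuinely independent with probability $1/x$ per node; the number of canonical paths and sub-paths must then remain polynomial so that $h = \Theta(x \log n)$ absorbs the $n^{-\Omega(1)}$ loss while preserving the $\tilO(x)$ round complexity. Once this setup is in place, the Chernoff-style bound for (a) and the chain decomposition for (b) are essentially routine.
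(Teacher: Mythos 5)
Your overall strategy --- independent sampling, a truncated Bellman--Ford construction in the \LOCAL model, a union bound over canonically fixed shortest paths for (a), and a chain decomposition for (b) --- is exactly the standard Ullman--Yannakakis-style argument that the cited works rely on; the paper itself does not reprove this lemma but imports it by citation, so your reconstruction of the construction and of part (a) is sound (fixing the canonical paths before the coin flips and noting that shortest paths are simple, so the per-window events are independent, are indeed the points that need care).

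There is, however, a genuine flaw in your derivation of (b). In the case $\hop_G(u,v)\leq h$ you assert $w_\calS(u,v)=d_{G,h}(u,v)=d_G(u,v)$. The second equality does not follow: $\hop_G(u,v)\leq h$ only says that \emph{some} $u$-$v$ path has at most $h$ edges, not that a \emph{shortest} one does. Concretely, $u$ and $v$ may be joined by a single heavy edge (so $\hop_G(u,v)=1$) while every shortest $u$-$v$ path uses far more than $h$ light edges; then $d_{G,h}(u,v)>d_G(u,v)$, and your first branch yields only $d_\calS(u,v)\leq d_{G,h}(u,v)$, which is too weak. Your second branch does not catch this pair either, since it is triggered only when $\hop_G(u,v)>h$. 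The repair is easy and stays entirely within your framework: split instead on whether the canonical shortest path $P(u,v)$ has fewer than $h$ nodes (in which case $d_{G,h}(u,v)=w(P)=d_G(u,v)$ and the single skeleton edge suffices) or at least $h$ nodes (in which case the hitting property applies and the chain decomposition goes through). Note that your union bound in (a) already establishes the hitting property for \emph{every} canonical shortest path with at least $h$ nodes, irrespective of whether $\hop_G(u,v)\geq h$; that is precisely the mild strengthening of (a) that the corrected case split needs.
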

	
	Furthermore, for efficient scheduling of algorithms in parallel we also require the concept of so called \textit{helper sets} that have been introduced in \cite{Kuhn2020}. 
	The rough idea is, when given a set of nodes that have been sampled i.i.d., one can assign each sampled node a set of helper nodes of a certain size within some small distance, which essentially marks the spot where distance and neighborhood size (with respect to some ``load'' $x$) are in a balance. We parameterize the definition of helper sets from \cite{Kuhn2020} as follows.
	
	\begin{definition}[cf.\ \cite{Kuhn2020}]
		\label{def:helpers}
		Let $G = (V,E)$ be a graph, let $x < n$ and assume $W \subseteq V$ was sampled i.i.d.\ from $V$ with prob.\ $1/ x$. 
		A family $\{H_w \subseteq V \mid w \in W\}$ of \emph{helper sets} fulfills the following properties for all $w \in W$ and some integer $\mu \in \smash{\tilT (x)}$.
		(1) Each $H_w$ has size at least $\mu$. (2) For all $u \in H_w$ it holds that $hop(w,u) \leq \mu$. (3) Each node is member of at most $\tilO(1)$ sets $H_w$.
	\end{definition}
	
	The computation of helper sets by \cite{Kuhn2020} works out of the box and we are going to utilize it in the following form.
	
	\begin{lemma}[see \cite{Kuhn2020}]
		\label{lem:helpers}
		A family of helper sets $\{H_w \subseteq V \mid w \in W \}$ as given in Definition \ref{def:helpers} can be computed w.h.p.\ in $\tilO\big(x\big)$  rounds.
	\end{lemma}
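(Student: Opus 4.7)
The plan is to invoke the helper-set construction of Kuhn and Schneider essentially unchanged, verifying only that it carries over to the parameterization used here and that its round complexity matches the bound claimed. At a high level, each sampled node $w \in W$ initiates a BFS of hop-radius $\mu \in \tilT(x)$ in the local network. Since \HYBRID admits unbounded message sizes over local edges ($\lambda = \infty$), all $|W|$ BFSes proceed concurrently without congestion and terminate after $\mu \in \tilO(x)$ rounds, which immediately gives the claimed round complexity; properties (1)--(3) are then established by analyzing the final assignment.

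First I would fix $\mu = \Theta(x \log n)$ large enough so that, by a Chernoff argument on the i.i.d.\ sampling of $W$, every ball of radius $\mu$ in $G$ contains roughly $|B|/x$ sampled nodes w.h.p., which is the right ballpark for the subsequent assignment to succeed. Next, each node $u$ draws a random priority. Once $u$ has observed, via the parallel BFSes, the set of sampled nodes within its $\mu$-ball, it commits to be a helper for $w$ precisely when $u$ is among the first $\mu$ nodes chosen by $w$'s selection rule (ordered by hop-distance to $w$ and broken by the random priority). By construction, each $H_w$ collected this way has size $\geq \mu$ (property~(1)) and contains only nodes within $\mu$ hops of $w$ (property~(2)); the random priorities and the cap of $\mu$ per sampled node jointly enforce property~(3), that each node is claimed by only $\tilO(1)$ sampled nodes.

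The main obstacle is the probabilistic analysis of this assignment: one must simultaneously lower-bound $|H_w|$ for every $w \in W$ and upper-bound the multiplicity with which each node is selected, across all balls of radius $\mu$. This boils down to showing, for every ball $B$ of radius $\mu$, that the number of sampled nodes in $B$ is tightly concentrated around its expectation, and that the random priorities break ties symmetrically enough that no node is over-requested; a union bound over the polynomially many relevant balls and over all $w \in W$ then yields the high-probability guarantee. Because this technical load-balancing argument is exactly the one carried out by Kuhn and Schneider for their helper-set construction, the present lemma reduces to invoking their analysis verbatim and checking the round-count accounting, which is dominated by the $\mu \in \tilO(x)$ parallel BFS rounds in the local mode.
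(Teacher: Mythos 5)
The paper does not actually prove this lemma; it imports the construction wholesale from Kuhn and Schneider with the remark that it ``works out of the box,'' so there is no in-paper argument to match yours against. Your proposal, however, goes further and commits to a concrete construction, and that construction has a genuine flaw. You let each sampled node $w$ claim as helpers the first $\mu$ nodes ordered by hop-distance to $w$, with ties broken by random priorities, and assert that this ``jointly enforces'' property~(3) of Definition~\ref{def:helpers}. It does not. Hop-distance ordering takes precedence over the priorities, so a single hub vertex that is the unique nearest neighbor of many sampled nodes is claimed by all of them: in a star graph with $x=\sqrt{n}$, the center lies at hop-distance $1$ from every sampled leaf and therefore lands in $\Theta(n/x)$ of the sets $H_w$, which is far from $\tilO(1)$. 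Random priorities only symmetrize the choice among nodes at \emph{equal} distance from $w$; they cannot prevent this kind of over-request, which is driven by the graph topology. The averaging observation (total demand $|W|\cdot\mu\approx n\,\polylog n$, hence $\tilO(1)$ per node ``on average'') is exactly the point where average load and maximum load diverge.

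The actual difficulty of the lemma is precisely this load-balanced assignment, and the cited construction resolves it by a different, essentially helper-driven mechanism (each node determines, from a ball grown until it contains enough mass, which few sampled nodes it will serve), rather than by having each $w$ greedily grab its nearest $\mu$ nodes. Your closing appeal to ``invoking their analysis verbatim'' does not repair the gap, because the analysis you would be invoking is an analysis of a different algorithm than the one you describe. The parts of your sketch that do hold up are the round-count accounting (parallel BFS to depth $\mu\in\tilO(x)$ is free of congestion since $\lambda=\infty$) and properties~(1) and~(2), which follow from connectivity of $G$; the missing idea is a selection rule for which property~(3) can actually be proved.
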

	
	We are now ready to prove the following theorem.
	
	\begin{theorem}
		\label{thm:scheduling}
		Let $\gamma < k < n$ and let $\calS$ be a skeleton graph of the local graph $G$ with sampling probability $\sqrt{\gamma/k}$ (see Definition \ref{def:skeleton_graph}). Let $\calA_1, \dots, \calA_k$ be \hybrid algorithms operating on $\calS$ with round complexity at most $T$. Then $\calS$ can be constructed and  $\calA_1, \dots, \calA_k$ can be executed on $\calS$ in $\tilO\big(\sqrt{k/\gamma}\cdot T\big)$ rounds of the \hybridpar{\infty}{\gamma} model w.h.p.
	\end{theorem}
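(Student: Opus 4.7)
My plan is to construct the skeleton graph and helper structures in a setup phase, and then simulate the $T$ rounds of all $k$ algorithms in lockstep so that each simulated round costs only $\tilO(\sqrt{k/\gamma})$ physical rounds.

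First I would construct $\calS$ via Lemma~\ref{lem:skeleton-graph} with sampling probability $\sqrt{\gamma/k}$, which takes $h \in \tilT(\sqrt{k/\gamma})$ rounds and equips each skeleton node with its incident skeleton edges together with a BFS tree of depth $h$ in $G$ reaching all of them. I would then invoke Lemma~\ref{lem:helpers} with $x = \sqrt{k/\gamma}$, giving every skeleton node $w$ a helper set $H_w$ of size $\mu \in \tilT(\sqrt{k/\gamma})$ at hop distance at most $\mu$, with each physical node lying in at most $\tilO(1)$ such sets. I would partition the $k$ algorithms canonically among $w$'s helpers so that each helper is responsible for $k/\mu \in \tilT(\sqrt{k\gamma})$ of them, and disseminate each skeleton node's helper list via the global network in $\tilO(1)$ rounds (only $\tilO(n)$ assignments in total, $\tilO(1)$ per physical node), so that any future sender can compute which helper of any receiver is handling a given algorithm.

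Then I would simulate one round of all $k$ algorithms at a time, iterating $T$ times. For the local mode on $\calS$, each $w$ routes its $k$ per-edge messages along the depth-$h$ BFS tree in $G$; because $\lambda = \infty$ all $k$ messages share every local edge in one physical round and overlaps between trees of different skeleton nodes cost nothing, so the local step finishes in $\tilO(h) = \tilO(\sqrt{k/\gamma})$ rounds. For the global mode, $w$ first forwards its $k$ outgoing global messages to the assigned helpers across the local network in $\tilO(\mu)$ rounds; each helper then ships $k/\mu \cdot \tilO(1) = \tilO(\sqrt{k\gamma})$ bits globally with bandwidth $\gamma$, requiring $\tilO(\sqrt{k/\gamma})$ rounds (with only a polylog blow-up from each physical node belonging to $\tilO(1)$ helper sets); the helpers finally relay the inbound responses back to $w$ in $\tilO(\mu)$ more rounds. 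Summing, one simulated round costs $\tilO(\sqrt{k/\gamma})$ physical rounds and hence $T$ simulated rounds cost $\tilO(\sqrt{k/\gamma}\cdot T)$ overall.

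The main obstacle I expect is getting the global phase right: the sender's helper must address precisely the helper of the receiver that handles the same algorithm, and the resulting global traffic must not overload any single node. The addressing I would handle via the setup-time dissemination of helper lists combined with the canonical algorithm-to-helper assignment; the load balance is exactly the at-most-$\tilO(1)$-sets guarantee of Lemma~\ref{lem:helpers}, which keeps each physical node's global duty within a polylog factor of $\gamma$ per round. The local phase is comparatively routine, since $\lambda = \infty$ neutralises all congestion arising from the many overlapping BFS trees that simultaneously pipe messages through the underlying graph.
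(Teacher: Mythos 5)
Your proposal is correct and follows essentially the same route as the paper: construct the skeleton via Lemma~\ref{lem:skeleton-graph}, attach helper sets via Lemma~\ref{lem:helpers}, assign \smash{$\tilO(\sqrt{k\gamma})$} algorithms to each of the \smash{$\tilT(\sqrt{k/\gamma})$} helpers, and charge \smash{$\tilO(\sqrt{k/\gamma})$} rounds per simulated round for both the local traffic (unbounded bandwidth over $h$-hop paths) and the global traffic (\smash{$\tilO(\sqrt{k\gamma})$} messages at capacity $\gamma$, with the $\tilO(1)$-membership property bounding per-node load). The only cosmetic difference is that the paper ships each skeleton node's input to its helpers once and lets the helpers simulate the algorithms end-to-end, exchanging messages directly with the paired helpers of adjacent skeleton nodes, whereas you keep the simulation state at the skeleton node and use the helpers as per-round global-bandwidth relays; both variants yield the same bound.
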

	
	\begin{proof}
		The first step is to compute the skeleton graph $S =(V_\calS, E_\calS)$ as defined in Definition \ref{def:skeleton_graph} with sampling probability \smash{$\sqrt{\gamma/ k}$}, which takes \smash{$\tilO\big(\!\sqrt{k/\gamma}\big)$}  rounds and guarantees $d_\calS(u,v) = d_G(u,v)$ for all $u.v \in V_\calS$ w.h.p., by Lemma \ref{lem:skeleton-graph}.	
		As second step, we compute helper sets for $V_\calS$, which assign each $u \in V_\calS$ a set $H_u \subseteq V$ such that (1) \smash{$|H_u| \in \tilT\big(\!\sqrt{k/\gamma}\big)$} and (2) \smash{$\max_{v \in H_u}\! \hop(u,v) \in \tilT\big(\!\sqrt{k/\gamma}\big)$} and (3) each $v \in V$ is member of $\tilO(1)$ helper sets. This is accomplished w.h.p.\ in \smash{$\tilO\big(\!\sqrt{k/\gamma}\big)$} rounds by Lemma \ref{lem:helpers}.
		
		Note that $\calS$ is established in a distributed sense, where each skeleton edge is known by its incident skeleton nodes. The third step is to transmit the incident edges and the input w.r.t.\ each algorithm $\calA_i$ of $u \in \calS$ to its helpers $H_u$ using the local network, which takes \smash{$\tilO\big(\!\sqrt{k/\gamma}\big)$} rounds due to (2).	This gives each helper $v \in H_u$ all the required information to simulate any $\calA_i$ on $\calS$ on $u$'s behalf, if also provided with the messages that $u$ receives during the execution of $\calA_i$.
		
		In the fourth step, the simulation of the algorithms $\calA_i$ for node $u\in \calS$ is distributed among all helpers $H_u$ in a balanced fashion, that is, each helper $v\in H_u$ is assigned at most $\ell \leq \lceil k/ |H_u| \rceil \in \tilO\big(\!\sqrt{k\gamma}\big)$ algorithms $\calA_i$ to simulate (where $\ell$ is uniform for all helper sets). In particular, we enumerate helpers $H_u$ with indices $v_j$ and assign algorithms $\calA_{\ell (j-1) +1}, \dots, \calA_{\ell j}$ to $v_j$.	
		Furthermore, for each edge $\{u,u'\} \in E_\calS$ and each $\calA_i$, we pair up the helper $v_j \in H_u$ that simulates $\calA_i$ for $u$ with the helper $v_j' \in H_{u'}$ that simulates $\calA_i$ for $u'$.
		By Definitions $\ref{def:skeleton_graph}$ and \ref{def:helpers} (regarding the distances between helpers and between incident skeleton nodes) and the triangle inequality it holds that $$\hop(v_j,v_j') \leq \hop(v_j,u) + \hop(u,u') + \hop(u',v_j') \in \smash{\tilO\big(\!\sqrt {k/\gamma}\big)}$$ and therefore step four can be coordinated in \smash{$\tilO\big(\!\sqrt {k/\gamma}\big)$} rounds via the local network.	
		It remains to coordinate the message exchange among helpers that simulate some $\calA_i$ for some skeleton node $u\in \calS$. In the first round, all outgoing messages can be computed locally by helpers $v_j \in H_u$ based on the input they received from $u$. In general, we presume that the simulation and computation of outgoing messages was correct up to the current round. It suffices to show the correct transmission of all messages for all simulated algorithms for the next round. 
		
		\textit{Messages via the local network $\calS$:} 	
		The hop-distance between helpers $v_j,v_j'$ that simulate $\calA_i$ for two skeleton nodes that are adjacent in $\calS$ nodes is \smash{$\hop(v_j,v'_j) \in \tilO\big(\! \sqrt {k/\gamma}\big)$}. Hence, the local messages can be transmitted in the same number of rounds using the unlimited local network. \textit{Messages via the global network:} each helper is assigned at most \smash{$\ell \in \tilO\big(\!\sqrt{k\gamma}\big)$} algorithms to simulate. Since each helper has a global capacity of $\gamma$ all global messages can be sent sequentially in \smash{$\tilO\big(\!\sqrt{k\gamma} / \gamma\big) = \tilO\big(\!\sqrt{k/\gamma} \big)$} rounds. Due to the canonical assignment of algorithms $\calA_i$ to helpers $v_j$, no helper will receive more messages than it sends in each round.
		
		Consequently, each helper can simulate a single round of all assigned $\calA_i$ in \smash{$\tilO\big(\!\sqrt{k/\gamma} \big)$}  rounds, thus it takes \smash{$\tilO\big(\!\sqrt{k/\gamma} \cdot  T\big)$} rounds to complete the simulation of all $\calA_i$. Finally, any output that was computed by some helper $v_j \in H_u$ in the simulation of $\calA_i$ can be transmitted back to $u$ in \smash{$\tilO\big(\!\sqrt{k/\gamma} \big)$} rounds via the local network.
	\end{proof}
	
	\subsection{Solving the $k$-SSP Problem}
	
	We can now employ the procedure from Section \ref{ssec:scheduling} to schedule $k$ instances of the SSSP algorithm from Section \ref{sec:sssp_logtime} on an appropriately sized skeleton graph $\calS$, which allows us to solve $k$-SSP for the case that the sources are part of the skeleton $\calS$.
	
	\begin{lemma}
		\label{lem:k-SSP_on_skeleton}
		Let $\gamma < k < n$ and let $\calS = (V_\calS, E_\calS)$ be a skeleton graph of the local graph $G$ with sampling probability \smash{$\sqrt{\gamma/k}$} (see Definition \ref{def:skeleton_graph}). Provided that all sources are in $V_\calS$ we can solve the $k$-SSP problem with stretch $1 \p \eps$ in $\tilO\big(\!\sqrt{k/\gamma}\cdot \tfrac{1}{\eps^2}\big)$ rounds of the \hybridpar{\infty}{\gamma} model w.h.p.	
	\end{lemma}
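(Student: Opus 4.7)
The plan is to combine the polylogarithmic SSSP algorithm from Theorem~\ref{thm:almost_shortest_sssp} with the parallel scheduling framework of Theorem~\ref{thm:scheduling}, and then add one local-network propagation step so that non-skeleton nodes also learn their distances. Since every source lies in $V_\calS$, each source can serve as the start of an SSSP instance that runs entirely on $\calS$, and by Lemma~\ref{lem:skeleton-graph}(b) the distances obtained on $\calS$ coincide (w.h.p.) with the true distances in $G$ between skeleton nodes.

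First I would construct the skeleton graph $\calS$ with sampling probability $\sqrt{\gamma/k}$ via Lemma~\ref{lem:skeleton-graph}; this also fixes the hop-parameter $h\in\tilT(\sqrt{k/\gamma})$. Next, for each source $s_i$ let $\calA_i$ be an instance of the $(1\p\eps)$-approximate SSSP algorithm from Theorem~\ref{thm:almost_shortest_sssp} rooted at $s_i$, to be run on $\calS$. Each $\calA_i$ has round complexity $T=\tilO(1/\eps^2)$, so Theorem~\ref{thm:scheduling} lets us execute all $k$ of them in parallel on $\calS$ within $\tilO(\sqrt{k/\gamma}\cdot 1/\eps^2)$ rounds of $\hybridpar{\infty}{\gamma}$. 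At the end of this phase, each skeleton node $v\in V_\calS$ holds values $\tilde d_\calS(v,s_i)$ with $d_G(v,s_i)\le \tilde d_\calS(v,s_i)\le (1\p\eps)\,d_G(v,s_i)$ for every source $s_i$ w.h.p.

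The second phase propagates these values to the remaining nodes using $h$ rounds of Bellman--Ford on $G$ in the local network. Each non-skeleton node initializes $\tilde d(u,s_i)=\infty$, each skeleton node initializes $\tilde d(v,s_i)=\tilde d_\calS(v,s_i)$, and in each round every node relaxes along its incident edges, sending for each source a single $\tilO(1)$-bit distance value to each neighbor. Since $\lambda=\infty$ in \hybrid, sending all $k$ labels per edge per round is free, so this phase costs $h\in\tilO(\sqrt{k/\gamma})$ rounds. Correctness and stretch follow from Lemma~\ref{lem:skeleton-graph}(a): if $\hop(u,s_i)\le h$ then $h$ rounds of Bellman--Ford in $G$ already yield $d_G(u,s_i)$ exactly; otherwise there exists a shortest $u$--$s_i$-path containing a skeleton node $v$ within $h$ hops of $u$, giving
\[
\tilde d(u,s_i)\;\le\; d_{G,h}(u,v)+\tilde d_\calS(v,s_i)\;\le\; d_G(u,v)+(1\p\eps)\,d_G(v,s_i)\;\le\;(1\p\eps)\,d_G(u,s_i),
\]
while $\tilde d(u,s_i)\ge d_G(u,s_i)$ is immediate since Bellman--Ford never underestimates true distances.

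I expect two places to need care. The first is verifying that the SSSP algorithm of Theorem~\ref{thm:almost_shortest_sssp}, which is stated for the local graph in \hybrid, is faithfully executed by the scheduling framework when its ``local graph'' is $\calS$; this is exactly the contract of Theorem~\ref{thm:scheduling}, but one should confirm that the $\mathsf{Minor}$-$\mathsf{Aggregation}$/oracle steps used internally respect the helper-set simulation (they do, since they are ordinary synchronous rounds on the skeleton). The second is the stretch bookkeeping in the propagation phase: one must show the extra $(1\p\eps)$ factor incurred on the $V_\calS$-to-source segment does not compound with the exact Bellman--Ford segment, which is the short inequality above. Combining the $\tilO(\sqrt{k/\gamma}\cdot 1/\eps^2)$ scheduling cost with the $\tilO(\sqrt{k/\gamma})$ propagation cost yields the claimed round complexity w.h.p.
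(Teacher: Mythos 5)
Your proposal is correct and follows essentially the same route as the paper: schedule the $k$ instances of the Theorem~\ref{thm:almost_shortest_sssp} SSSP algorithm on $\calS$ via Theorem~\ref{thm:scheduling}, then spend $h\in\tilO(\sqrt{k/\gamma})$ local rounds so each non-skeleton node $v$ takes $\min_{u}\, d_{G,h}(v,u)+\tilde d_\calS(u,s)$, with the identical stretch chain via Lemma~\ref{lem:skeleton-graph}. Your Bellman--Ford phrasing of the propagation step is just a restatement of the paper's ``collect labels from skeleton nodes within $h$ hops and minimize,'' so there is nothing substantively different to compare.
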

	
	\begin{proof}
		Let $K \subseteq V_\calS$ be the set of $k$ source nodes. Let algorithms $\calA_s$ be an instance of the SSSP algorithm from Section \ref{sec:sssp_logtime} for some source $s \in K$. By Theorem \ref{thm:almost_shortest_sssp} the algorithms $\calA_s$ have a native round complexity of $\tilO(\tfrac{1}{\eps^2})$ to compute an approximation with stretch $1 \p \eps$. We apply Theorem \ref{thm:scheduling} to schedule all $\calA_s, s \in K$ on $\calS$ in parallel in \smash{$\tilO\big(\!\sqrt{k/\gamma} \cdot \frac{1}{\eps^2} \big)$} rounds.	
		After conducting this procedure, every skeleton node $u \in V_\calS$ knows a stretch $1 \p \eps$ distance approximation $\tild_\calS(u,s)$  for every $s \in K$. 
		
		In a post-processing step, each node $v \in V$ that is not part of the skeleton $\calS$ computes its own distance to each source as follows. For this we utilize that edges $E_\calS$ correspond to paths of at most \smash{$h \in \tilO\big(\!\sqrt{k/\gamma} \big)$} hops in $G$.
		First, node $v$ retrieves its distance information to each source from all skeleton nodes within $h$ hops, i.e., in \smash{$\tilO(\sqrt{k/\gamma})$} rounds. For each source $s \in K$, node $v$ computes $\tild(v,s) := \min_{u \in \calS} d_{h,G}(v,u) + \tild_\calS(u,s)$. By Lemma \ref{lem:skeleton-graph} (a) there is a shortest path from $v$ to $s$ which has a skeleton node $u \in V_\calS$ within $h$ hops of $v$ w.h.p. For this particular skeleton node $u$ we have 
		\begin{align*}
			\tild(v,s) & \hspace{4.1mm}\leq \hspace{3.9mm} d_{h,G}(v,u) + \tild_\calS(u,s) \\
			& \hspace*{1.6mm} \stackrel{Thm.\ref{thm:almost_shortest_sssp}}{\leq} \hspace*{1.6mm} d_{h,G}(v,u) + (1 \p \eps)d_\calS(u,s)\\
			&\!\stackrel{Lem.\ref{lem:skeleton-graph}(b)}{=} d_{h,G}(v,u) + (1 \p \eps)d_G(u,s) \\
			& \hspace{4.1mm}\leq \hspace{3.9mm} (1 \p \eps)\big(d_{h,G}(v,u) + d_G(u,s)\big)\phantom{\stackrel{Lem.\ref{lem:skeleton-graph}(b)}{=}}\\ 
			& \!\stackrel{Lem.\ref{lem:skeleton-graph} (a)}{\leq} (1 \p \eps)d_{G}(v,s).\tag*{\qedhere}
		\end{align*}
	\end{proof}

	Finally, we show that solving $k$-SSP on the skeleton $\calS$ can also be used to obtain good approximations for more general cases of the $k$-SSP problem on the local graph $G$ with the different trade offs as stated in Theorem \ref{thm:k-ssp}.
	
	\begin{proof}[Proof of Theorem \ref{thm:k-ssp}]
		The case $k \leq \gamma$ is the simplest. Here, we have enough available global capacity in $\hybridpar{\infty}{\gamma}$ to execute the $k$ \hybrid SSSP algorithms in parallel in $\tilO(\tfrac{1}{\eps^2})$ rounds (Theorem \ref{thm:almost_shortest_sssp}), since each such algorithm requires only $\tilO(1)$ global capacity per round and the local capacity is unlimited.
		
		In the remaining proof we assume $\gamma < k$.		
		Let $\calS$ be a skeleton graph with skeleton nodes $V_\calS$ sampled with probability $\sqrt{\gamma/k}$ and edges $E_\calS$ that correspond to paths of at most \smash{$h \in \tilO\big(\!\sqrt{k/\gamma} \big)$} hops in $G$, which can be computed in  \smash{$\tilO\big(\!\sqrt{k/\gamma} \big)$} rounds (see Definition \ref{def:skeleton_graph} and Lemma \ref{lem:skeleton-graph}). 
		Let $K \subseteq V_\calS$ be the set of $k$ source nodes.
		
		Consider the case where $K$ was sampled i.i.d.\ from $V$ and where we are interested in the standard \hybrid model, i.e., $\gamma \in \tilO(1)$.	
		We make a distinction on the size of $k$. For a large number of sources $k \geq n^{2/3}$ we apply the known $k$-SSP algorithm by \cite{CensorHillel2021a} whose round complexity is \smash{$\tilO(n^{1/3}+\sqrt{k})$}, which equals \smash{$\tilO(\!\sqrt{k})$} for $k$ above this threshold and even provides an exact solution.
		
		If the number of sources is small $k \leq n^{2/3}$ we add $K$ to $V_\calS$ and conduct the construction of the skeleton edges with the set $K$ in addition to the sampled nodes.
		Since each node has a-priori the same chance to be drafted to the skeleton either by being sampled or as a random source, the property of an i.i.d.\ selection of nodes is retained. Note that \smash{$|V_\calS| \in \tilT\big(n/ \sqrt{k}\big)$} w.h.p.\ before adding the sources $K$ to $V_\calS$ (due to the sampling probability \smash{$\tilO\big(1/\sqrt k\big)$} and w.h.p.\ by applying a Chernoff bound). The number of sources added to $V_\calS$  can be bounded by \smash{$k \leq n^{2/3} = n / n^{1/3}\leq n / \sqrt{k}$}. Thus adding $K$ to $V_\calS$ only changes the size of $|V_\calS|$ by a $\tilO(1)$ factor. Therefore, we can apply Lemma \ref{lem:k-SSP_on_skeleton} to obtain a $1 \p \eps$ approximation of $k$-SSP on $G$ with $k \leq n^{2/3}$ random sources.
		
		Finally, consider the case where the $k$ sources are chosen arbitrarily. We have to deal with the fact that these might be highly concentrated in a local neighborhood and can therefore not simply be added to the skeleton as this leads to difficulties with scheduling the SSSP algorithms (even assuming $k \leq n^{2/3}$ does not help here). We employ technique similar to \cite{Kuhn2022}, where each source $s \in K$ tags its closest skeleton node $u_s :=  \argmin_{w \in V_\calS} d_{h,G}(s,w)$ as its \textit{proxy source} in $\calS$ (note that $u_s$ is within \smash{$h \in \tilO\big(\!\sqrt{k}\big)$} hops of $s$ by Lemma \ref{lem:skeleton-graph} (a)).
		
		Subsequently, we have $k' \leq k$ \textit{skeleton} nodes that are tagged as proxy sources, which again allows us to compute $1 \p \eps$ approximations for $k'$-SSP on the proxy sources in $\tilO\big(\!\sqrt{k/\gamma}\cdot \tfrac{1}{\eps^2}\big)$ rounds using Lemma \ref{lem:k-SSP_on_skeleton}. To construct decent approximations to the real source nodes we have to first make the distance $d_{h,G}(u_s,s)$ from each source $s\in K$ to its closest skeleton $u_s$ public knowledge. 
		
		For this we use the so called token dissemination routine (see \cite{Augustine2020a}), which solves the following problem. Nodes have tokens of size $\bigO(\log n)$ (in our case distance labels) which need to be broadcast to everyone. Given that there are at most $x$ such tokens in the network and no node has more than one token, \cite{Augustine2020a} solves this in  $\tilO\big(\!\sqrt{x}\big)$ rounds in the \hybrid model. In the \hybridpar{\infty}{\gamma} we can exploit the higher global capacity by running $y \in \tilO(\gamma)$ such algorithms in parallel, where each algorithm is responsible for broadcasting at most $x \in \tilO(k/\gamma)$ tokens, thus the round complexity is $\tilO\big(\!\sqrt{k/\gamma}\big)$ (note that each token can be assigned i.i.d., randomly to one of the $y$ parallel token dissemination algorithms ensuring $x \in \tilO(k/\gamma)$ w.h.p.).
		
		Now each node $v \in V$ constructs its (approximate) distance to each source $s \in S$ as follows. If there is a shortest path from $v$ to $s$ with at most $h$ hops, then the local network suffices to find it in \smash{$h \in \tilO\big(\!\sqrt{k/\gamma}\big)$} rounds.	
		Else, $v$ computes \smash{$\hat d(v,s) := \tild(v,u_s) + d_{h,G}(u_s,s)$}. Note that there is a node $w \in V_\calS$ within $h$ hops of $s$ on some shortest path from $v$ to $s$. Due to the property of $u_s$ we have $d_{h,G}(u_s,s) \leq d_{h,G}(w,s) \leq d_{h,G}(v,s)$. We combine this with the triangle inequality and obtain
		\begin{align*}
			\hat d(v,s) & \hspace{4.1mm}= \hspace{3.9mm} \tild(v,u_s) + d_{h,G}(u_s,s)\\
			& \hspace*{1.6mm} \stackrel{\text{Thm. }\ref{thm:almost_shortest_sssp}}{\leq} \hspace*{1.6mm} (1 \p \eps)d_{G}(v,u_s) + d_{h,G}(u_s,s)\\
			& \hspace{4.1mm}\leq \hspace{3.9mm} (1 \p \eps)\big(d_{G}(v,w) + d_{G}(w,u_s)+ d_{h,G}(u_s,s)\big)\phantom{\stackrel{Lem.\ref{lem:skeleton-graph}(b)}{=}}\\ 
			& \hspace{4.1mm}\leq \hspace{3.9mm} (1 \p \eps)\big(d_{G}(v,w) + d_{G}(w,u_s)+ d_{h,G}(w,s)\big)\phantom{\stackrel{Lem.\ref{lem:skeleton-graph}(b)}{=}}\\ 
			& \hspace{4.1mm}= \hspace{3.9mm} (1 \p \eps)\big(d_{G}(v,s) + d_{G}(w,u_s)\big)\phantom{\stackrel{Lem.\ref{lem:skeleton-graph}(b)}{=}}\\ 
			& \hspace{4.1mm} \leq \hspace{3.9mm} (1 \p \eps)\big(d_{G}(v,s) + d_{G}(w,s) + d_{G}(s,u_s)\big) \phantom{\stackrel{Lem.\ref{lem:skeleton-graph}(b)}{=}}\\ 
			& \hspace{4.1mm} \leq \hspace{3.9mm} (1 \p \eps)\big(3d_{G}(v,s)\big) \stackrel{\eps' := 3 \eps}{=} (3 \p \eps')d_{G}(v,s). \phantom{\stackrel{Lem.\ref{lem:skeleton-graph}(b)}{=}} \tag*{\qedhere}
		\end{align*}
	\end{proof}

	\bibliographystyle{plain}
	\bibliography{ref/ref}

\end{document}